\let\Horig\H
\newtheorem{theorem}{Theorem}
\newtheorem{definition}{Definition}
\newtheorem{proposition}{Proposition}
\newtheorem{corollary}{Corollary}
\newcommand{\eq}[1]{(\ref{eq:#1})}
\newcommand{\thm}[1]{\hyperref[thm:#1]{Theorem~\ref*{thm:#1}}}
\newcommand{\cor}[1]{\hyperref[cor:#1]{Corollary~\ref*{cor:#1}}}
\newcommand{\defn}[1]{\hyperref[defn:#1]{Definition~\ref*{defn:#1}}}
\newcommand{\lem}[1]{\hyperref[lem:#1]{Lemma~\ref*{lem:#1}}}
\newcommand{\prop}[1]{\hyperref[prop:#1]{Proposition~\ref*{prop:#1}}}
\newcommand{\assum}[1]{\hyperref[assum:#1]{Assumption~\ref*{assum:#1}}}
\newcommand{\fig}[1]{\hyperref[fig:#1]{Figure~\ref*{fig:#1}}}
\newcommand{\tab}[1]{\hyperref[tab:#1]{Table~\ref*{tab:#1}}}
\newcommand{\algo}[1]{\hyperref[algo:#1]{Algorithm~\ref*{algo:#1}}}
\renewcommand{\sec}[1]{\hyperref[sec:#1]{Section~\ref*{sec:#1}}}
\newcommand{\append}[1]{\hyperref[append:#1]{Appendix~\ref*{append:#1}}}
\newcommand{\fac}[1]{\hyperref[fac:#1]{Fact~\ref*{fac:#1}}}
\newcommand{\lin}[1]{\hyperref[lin:#1]{Line~\ref*{lin:#1}}}
\def\>{\rangle}
\def\<{\langle}
\renewcommand{\H}{\mathcal{H}}
\title{Quantum Approximate Optimization Algorithms for Maximum Cut on Low-Girth Graphs}
\author[1,2]{Tongyang Li\thanks{tongyangli@pku.edu.cn}}
\author[1,2]{Yuexin Su\thanks{yuexinsu@stu.pku.edu.cn}}
\author[3]{Ziyi Yang\thanks{2100010833@stu.pku.edu.cn}}
\author[4]{Shengyu Zhang\thanks{shengyzhang@tencent.com}}
\affil[1]{School of Computer Science, Peking University}
\affil[2]{Center on Frontiers of Computing Studies, Peking University}
\affil[3]{School of Mathematical Science, Peking University}
\affil[4]{Tencent Quantum Laboratory}
\date{}
\begin{document}

\maketitle

\begin{abstract}
Maximum cut (MaxCut) on graphs is a classic NP-hard problem. In quantum computing, Farhi, Gutmann, and Goldstone proposed the Quantum Approximate Optimization Algorithm (QAOA) for solving the MaxCut problem. Its guarantee on cut fraction (the fraction of edges in the output cut over all edges) was mainly studied for high-girth  graphs, i.e., graphs with only long cycles. On the other hand, low-girth graphs are ubiquitous in theoretical computer science, including expander graphs being outstanding examples with wide applications in theory and beyond. 
In this paper, we apply QAOA to MaxCut on a set of expander graphs proposed by Mohanty and O'Donnell known as additive product graphs. Additionally, we apply multi-angle QAOA (ma-QAOA) to better utilize the graph structure of additive product graphs in ansatz design.
In theory, we derive an iterative formula to calculate the expected cut fraction of such graphs. This formula also extends to the quantum MaxCut problem.
On the other hand, we conduct numerical experiments to compare between best-known classical local algorithms and QAOA with constant depth. 
Our results demonstrate that QAOA outperforms the best-known classical algorithms by 0.3\% to 5.2\% on several additive product graphs, while ma-QAOA further enhances this advantage by an additional 0.6\% to 2.5\%. In particular, we observe cases that ma-QAOA exhibits superiority over best-known classical algorithms but QAOA does not.
Furthermore, we extend our experiments to planar graphs such as tiling grid graphs, where QAOA also demonstrates an advantage. 
\end{abstract}

\section{Introduction}

MaxCut is one of the most fundamental problems in graph theory. The input of the problem is a simple (unweighted, undirected) graph, and the goal is to partition the vertices into two sets such that the number of edges between the two parts is maximized. MaxCut is one of the 21 NP-complete problems shown by Karp in 1972~\cite{karp2010reducibility}. Nevertheless, Goemans and Williamson~\cite{goemans1995improved} gave an approximation algorithm for MaxCut with an expected solution being at least 0.878 times the number of cut edges of the optimal solution. Meanwhile, solving MaxCut problem beyond this approximation ratio (the number of cut edges divided by the number of maximal cut edges) 0.878 will imply that the Unique Games Conjecture does not hold~\cite{khot2007optimal}. Therefore, this approximation ratio value of 0.878 can be seen as an essential barrier for the MaxCut problem, and a theoretical point of perspective, an approximation algorithm for MaxCut is better if its approximation ratio is closer to 0.878.

More recently, quantum algorithms have become promising candidates in solving classical combinatorial problems, with the quantum approximate optimization algorithm (QAOA) introduced in \cite{farhi2014quantum}. QAOA is a quantum algorithm that produces approximate solutions to combinatorial optimizations problems. 
For the MaxCut problem, QAOA applies unitary evolutions of the sum-of-Pauli $X$ operator and the sum-of-Pauli $ZZ$ operator that encode the graph edges (see a formal definition in \sec{prelim-QAOA}) alternatively for $p$ times.
The quality of solution improves as $p$ increases, and Ref.~\cite{farhi2014quantum} proved that the optimal solution can be obtained when $p$ approaches infinity. On the other hand, it is observed that QAOA with constant $p$ can already guarantee good approximation for MaxCut. For 3-regular graphs, Ref.~\cite{farhi2014quantum} proved that for $p=1$, QAOA can get approximation ratio of at least 0.6924.\footnote{Note that there exists a 1-local quantum algorithm based on annealing that achieves an approximation ratio over 0.7020 on 3-regular graphs \cite{braida2024tight}. The $p$-local analysis of quantum annealing involves only analyzing it on subgraphs composed of the vertices within distance $p$ from each edge. In this paper, we specifically focus on QAOA with constant $p$.}
For $p=2$ and $p=3$, Ref.~\cite{wurtz2021maxcut} proved that QAOA can achieve approximation approximation ratio 0.7559 and 0.7924 for the MaxCut problem, respectively.

For more general graphs, previous literature on QAOA with approximation ratio guarantee for MaxCut focuses on graphs with high girth. In graph theory, the \emph{girth} of an undirected graph is the length of a shortest cycle contained in the graph. Hastings~\cite{hastings2019classical} compared single-step classical algorithms and $\text{QAOA}$ with $p=1$ on $D$-regular triangle-free graphs (i.e., graphs with girth $>3$), and found that single-step classical algorithms outperform $\text{QAOA}$ with $p=1$ for $3 \leq D \leq 1000$ except 4 choices of degree $D$. Furthermore, Marwaha~\cite{marwaha2021local} studied two-step classical algorithms on $D$-regular graphs with girth $>5$, and found that the optimal two-step threshold classical algorithm outperforms $\text{QAOA}$ with $p=2$ for all $5 < D \leq 500$. 
On the other hand, Basso et al.~\cite{basso2022quantum} analyzed the cut fraction (the number of cut edges divided by the number of total edges) of QAOA for MaxCut on $D$-regular graphs with girth $>2p+1$. They gave an iterative formula to evaluate the performance of QAOA for any fixed $p$ and $D$. With $p \geq 11$, QAOA outputs a cut with larger cut fraction than best-known classical algorithms~\cite{barak2022classical}.

Previous studies focus on high-girth $d$-regular graphs because QAOA is a local algorithm and for high-girth graphs, all local neighborhoods are two complete $d$-ary trees with roots glued together, which facilitates analysis. 
Random high-girth regular graphs are sparse \emph{expander graphs} with high probability, which are sparse graphs with strong connectivity, and they also have wide applications in theoretical computer science including algorithm design, error correcting codes, extractors, pseudorandom generators, sorting networks, etc~\cite{lubotzky2012expander,hoory2006expander}. While the performance of QAOA on high-girth graphs has been extensively studied, the approximation ratio guarantee of QAOA on expander graphs with low girth remains widely open. We extend previous studies to a set of low-girth expander graphs proposed by Mohanty and O’Donnell \cite{mohanty2020x}
known as  \emph{additive product graphs} (formally defined in \sec{x-ramanujan}). The additive product graphs are Ramanujan graphs, i.e., 
there are infinitely many graphs $G$ that are covered by them and their spectrum satisfies the second largest eigenvalue is no greater than the spectral radius of additive product graphs.
Such graphs are low-girth graphs with small cycles, of general interest to the theoretical computer science, but previous work on the cut fraction of QAOA requires high girth in analysis. Understanding of how QAOA performs on low-girth graphs with additive product graphs as examples is solicited, and such understanding may offer an insight on how QAOA performs on a variety of expander graphs.

\paragraph{Main results.} In this paper, we systematically investigate the cut fraction of QAOA for low-girth graphs. 
Moreover, to better utilize the graph structure, we discuss ma-QAOA (see definition and specific settings in \sec{ma-qaoa}) to enhance the advantage of QAOA. 

In theory, we apply QAOA to MaxCut on additive product graphs. We give an iterative formula to evaluate the expected cut fraction of additive product graphs for any fixed $p$. 
The derivation of the iterative formula for additive product graphs is based on analyzing the underlying construction of the graph and identifying the relevant subgraphs. 
\begin{theorem}[Main Theorem]
Suppose $X$ is an additive product graph defined in \defn{additive-graph}. Then for any $p$ and any parameters $(\boldsymbol{\gamma},\boldsymbol{\beta})\in [0,2\pi]^{2p}$, the expected cut fraction of the additive product graph satisfies
\begin{equation}
    \frac{\bra{\boldsymbol{\gamma},\boldsymbol{\beta}} C_{MC} \ket{\boldsymbol{\gamma},\boldsymbol{\beta}}}{|E|}
    = \frac12 + \frac{1}{2(|E({\underline{A}_1})|+\cdots+|E({\underline{A}_c})|)} \sum_{\substack{C \in [c], \\ (a,b) \in \underline{A}_C}} 
    \mathbb{E}[\underline{A}_C(a,b)],
\end{equation}
where the expectations $\mathbb{E}[\underline{A}_C(a,b)]$ are defined on different subgraphs and follow the recursive formulas in \eq{EA} and \eq{G}.
\end{theorem}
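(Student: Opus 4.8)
The plan is to run the standard light-cone analysis of QAOA, with the infinite $D$-ary tree that governs the high-girth case (cf.\ \cite{basso2022quantum}) replaced by the self-similar ``tree of atoms'' that describes the local geometry of an additive product graph. \emph{Step 1 (reduction to labeled edge observables, and symmetry).} Writing $C_{MC}=\sum_{(u,v)\in E}\tfrac12(1-Z_uZ_v)$ and expanding, the cut fraction equals $\tfrac12+\tfrac{1}{2|E|}\sum_{(u,v)\in E}\bigl(-\bra{\boldsymbol\gamma,\boldsymbol\beta}Z_uZ_v\ket{\boldsymbol\gamma,\boldsymbol\beta}\bigr)$ (up to the sign convention chosen for the QAOA unitaries). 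By \defn{additive-graph} every edge of $X$ is the image of a unique edge $(a,b)$ of a unique atom $\underline{A}_C$, and the automorphisms of $X$ act transitively on the edges carrying any fixed label $(C,a,b)$; hence $\bra{\boldsymbol\gamma,\boldsymbol\beta}Z_uZ_v\ket{\boldsymbol\gamma,\boldsymbol\beta}$ depends only on the label, and each label accounts for the same number (respectively, density) of edges. The sum over $E$ therefore collapses to an average over the $\sum_C|E(\underline{A}_C)|$ labels, and declaring $\mathbb{E}[\underline{A}_C(a,b)]$ to be $-\bra{\boldsymbol\gamma,\boldsymbol\beta}Z_aZ_b\ket{\boldsymbol\gamma,\boldsymbol\beta}$ for a representative edge of label $(C,a,b)$ already yields the displayed identity. (On a finite quotient of $X$ this holds exactly when the quotient is edge-transitive on each label class, and in general up to an $o(1)$ boundary correction — the sense in which ``the additive product graph'' should be read.) It remains to establish the recursion \eq{EA}--\eq{G} for these labeled expectations.

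\emph{Step 2 (locality).} Since the phase unitary $e^{-i\gamma_j C_{MC}}$ couples only adjacent vertices and the mixer $e^{-i\beta_j B}$ with $B=\sum_v X_v$ is a product of one-qubit gates, a depth-$p$ light-cone argument shows that $\bra{\boldsymbol\gamma,\boldsymbol\beta}Z_aZ_b\ket{\boldsymbol\gamma,\boldsymbol\beta}$ is unchanged if $X$ is replaced by the subgraph induced on the vertices at distance $\le p$ from $\{a,b\}$. By \defn{additive-graph} this subgraph is a finite ``tree of atoms'': the atom $\underline{A}_C$ carrying $(a,b)$, with a fresh copy of each admissible atom glued along every one of its vertices, and so on recursively, truncated at graph distance $p$ from $(a,b)$.

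\emph{Step 3 (the configuration-sum recursion — the main step).} Expanding $\bra{\boldsymbol\gamma,\boldsymbol\beta}Z_aZ_b\ket{\boldsymbol\gamma,\boldsymbol\beta}$ on this subgraph in the computational basis produces, for each vertex, a tuple of ``ket/bra'' bit values (one pair per layer), a factor $e^{i(\gamma_j/2)(z_uz_v-z'_uz'_v)}$ for every edge $(u,v)$ and every phase-layer $j$, a one-qubit transfer amplitude $\bra{z_v^{(j)}}e^{-i\beta_j X}\ket{z_v^{(j+1)}}$ at every vertex between consecutive layers, the fixed $\ket{+}$ boundary condition, and the two $Z$-insertions at $a,b$. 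Because the subgraph is a tree of atoms, this multilayer sum factorizes over the tree: each pendant block attached at a vertex $v$ contributes a factor depending on the rest of the configuration only through the layer-tuple at $v$. Collecting, for one block $\underline{A}_{C'}$ entered at a vertex $a'$, the sum over all internal configurations of that block — weighted by the block's own edge phases and vertex transfer amplitudes and multiplied by the recursively computed factors of the sub-blocks glued at its remaining vertices — yields a function of the layer-tuple at $a'$ and of the remaining depth budget. This ``internal block sum'' is exactly \eq{G}, while the ``outer recursion in the depth budget / across the tree of blocks'' is exactly \eq{EA}; the base case is budget $0$, where a leaf block contributes only its $\ket{+}$ amplitudes, an explicit elementary expression. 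Iterating from budget $p$ down to $0$ and finally performing the sums attached to $a$, $b$, and the root atom $\underline{A}_C$ evaluates $\mathbb{E}[\underline{A}_C(a,b)]$; combined with Step 1 this proves the theorem.

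\emph{The main obstacle, and the quantum extension.} In the high-girth $D$-regular setting of \cite{basso2022quantum} every local neighborhood is a genuine tree, so the recursion peels off one vertex and one subtree at a time. Here, by contrast, each block is a copy of an atom that may itself contain short cycles; it cannot be decomposed further and must be summed over ``all at once''. The entire difficulty therefore lies in the bookkeeping of Step 3 — tracking, for each block, which vertex and which atom-class it was entered from, the layer-tuple at every vertex of the block, the location of the two $Z$-insertions, and the $\ket{+}$ boundary — so that the per-block sums and the cross-block products really do assemble into the clean two-level recursion \eq{EA}--\eq{G}; once the correct data is chosen to carry along the recursion, the remainder is an unwinding of the factorization together with Step 1. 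Finally, repeating the same expansion with $Z_aZ_b$ replaced by the Heisenberg interaction $\tfrac12(I-X_aX_b-Y_aY_b-Z_aZ_b)$ and with the quantum-MaxCut cost Hamiltonian in place of $C_{MC}$ goes through with no change to the combinatorial structure, giving the claimed extension to quantum MaxCut.
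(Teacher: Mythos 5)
Your proposal is correct and follows essentially the same route as the paper: the edge-classification/symmetry step is the paper's Proposition~1 together with Eq.~\eq{cut-fraction}, and your configuration-basis expansion factorizing over the ``tree of atoms'' — with each block summed as a whole and its contribution passed to the parent through the layer-tuple at the gluing vertex — is exactly how the paper derives the recursions \eq{G} and \eq{EA}. The only cosmetic difference is that the paper works the $p=2$ example of \fig{subgraph} explicitly before stating the general formulas, whereas you argue the general factorization directly.
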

To the best of our knowledge, this is the first QAOA result that considers low-girth graphs with general $p$ with theoretical guarantee. Previous research has primarily focused on studying regular graphs of small values of $p$ \cite{wurtz2021maxcut} and high-girth regular graphs of general $p$ \cite{basso2022quantum}.

Technically, we extend the analysis in \cite{basso2022quantum} from high-girth regular graphs to the more general setting of graphs that do not necessarily have high-girth. Specifically, we identify a class of spectral expander graphs, namely additive product graphs, in which the structure of subgraphs can be analyzed systematically. By leveraging this subgraph structure, we are then able to analyze the cycles in low-girth graphs iteratively and obtain the expected cut fraction of such low-girth graphs. Furthermore, we extend the analytical technique to the quantum MaxCut problem on additive product graphs, extending the existing literature on quantum MaxCut for high-girth regular graphs \cite{kannan2024quantum}.

In experiments, we explore classical $k$-local algorithms for MaxCut on additive product graphs to provide convincing benchmarks against QAOA. We explore the performance of the best-known classical local algorithms previously utilized in evaluating QAOA's performance on high-girth regular graphs. Specifically, we evaluate the classical algorithms proposed by Barak and Marwaha \cite{barak2022classical} as well as the threshold algorithm demonstrated in \cite{hirvonen2014large,hastings2019classical,marwaha2021local}. Our findings indicate that on low-girth graphs, Barak and Marwaha's algorithm yields a superior expected cut fraction at lower values of $k$, whereas the threshold algorithm exhibits stronger performance at higher $k$ values.

We conduct extensive numerical evaluations for QAOA and the best-known classical local algorithms for the cut fraction of additive product graphs. Because low-girth graphs take significantly more computing resources, we only conduct numerical experiments for $p=1,2,3$. 
We found that for \fig{a} and \fig{c} tested under lower values of $p$, the QAOA algorithm outperformed the best-known classical local algorithms by 0.3-5.2 percentage and ma-QAOA enhances this advantage by 0.6-2.5 percentage. Moreover, although the classical local algorithms exhibit minor advantage against QAOA in specific graph instances \fig{b}, ma-QAOA surpasses the best-known classical local algorithms by 3.0 percentage at $p=2$. We demonstrate our results in \fig{experi}.
\begin{figure}[H]
    \centering
    \includegraphics[scale=0.5]{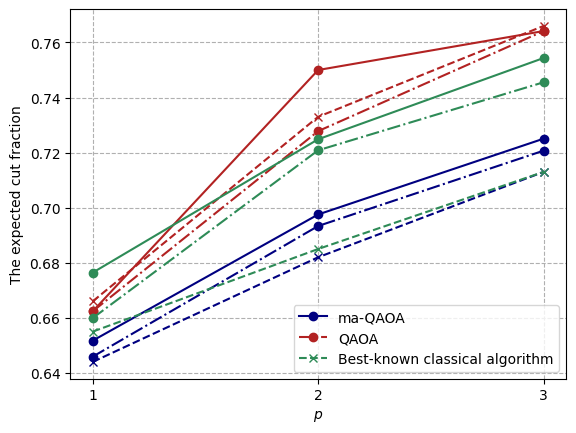}
    \caption{The expected cut fractions of QAOA and the best-known classical local algorithms are illustrated for the graphs shown in \fig{graph-set1}. The results in \fig{a}, \fig{b}, and \fig{c} correspond to the colors blue, red, and green, respectively. The solid lines, dash-dotted lines, and dashed lines represent the performance of ma-QAOA, QAOA, and the best-known classical local algorithms discussed in \sec{experiments}.}
    \label{fig:experi}
\end{figure}

Furthermore, we extend our experiments to planar graphs composed of polygons. The main difference between this type of planar graph and the additive product graph is the manner in which they extend. Additive product graphs extend in a more tree-like hierarchical structure, whereas the planar graphs we study exhibit a more mesh-like, interconnected structure. Thus far, we have investigated the expected cut fraction of the planar graph composed of pentagons and hexagons shown in \fig{5-6cycle}, as well as the planar graph composed of triangles, quadrilaterals, and hexagons presented in \fig{3-4-6cycle}. Given the computational complexity constraints, we have run the numerical experiments for $p={1,2}$. 
The results demonstrate that the QAOA still outperforms the classical algorithms by 0.3-2.2 percentage, and ma-QAOA enhances this advantage by 0.1-0.9 percentage.

\section{Preliminaries}

\subsection{The quantum approximate optimization algorithm and MaxCut}\label{sec:prelim-QAOA}
The quantum approximate optimization algorithm was introduced in \cite{farhi2014quantum}, which is a variational quantum algorithm that requires $2p$ parameters: $(\gamma_1, \gamma_2, \ldots, \gamma_p)$, $(\beta_1, \beta_2, \ldots, \beta_p)$. The input is an $n$-qubit string $z$ and the goal is to find an approximate ground state of cost function operator $C$, where $C\ket{z} = C(z) \ket{z}$. It does so by preparing the state $\ket{\boldsymbol{\gamma},\boldsymbol{\beta}}$ on the ground state $\ket{s}$ of mixing operator $B$. 
QAOA prepares the state
\begin{align}\label{eq:QAOA}
    \ket{\boldsymbol{\gamma},\boldsymbol{\beta}} = 
    U(B,\beta_p)U(C,\gamma_p)\cdots U(B,\beta_1)U(C,\gamma_1)
    \ket{s}.
\end{align}
where $U(B,\beta)=e^{-i\beta B}$ and $U(C,\beta)=e^{-i\gamma C}$.

The expectation of the cost function $C$ is $\bra{\boldsymbol{\gamma},\boldsymbol{\beta}} C \ket{\boldsymbol{\gamma},\boldsymbol{\beta}}$. When $p$ approaches infinity, Eq.~\eq{QAOA} can be seen as Trotterization of the adiabatic theorem, and thus it can reach the minimum of the cost function operator $C$ \cite{farhi2014quantum}. In practice, for a fixed value of $p$, we can measure $\ket{\boldsymbol{\gamma},\boldsymbol{\beta}}$ in computational basis and optimize $\boldsymbol{\gamma}$ and $\boldsymbol{\beta}$.

Given a graph $G=(V,E)$, the cost function of MaxCut is to evaluate how many edges are cut due to the partition of vertices. 
If the qubits of vertices $u$ and $v$ in edge $(u,v)$ are different, then $Z_uZ_v=-1$ and count one edge to the cost function. If the qubits are the same, $\frac12(-Z_uZ_v+1)=0$. Thus the operator can be written as:
\begin{equation}
\label{eq:maxcut-operator}
    C_{MC} = \sum_{(u,v)\in E}\frac12(-Z_uZ_v+1),
\end{equation}
where $Z_u$ is Pauli $Z$ operator on qubit $u$.

Since the constant $1/2$ in $C_{MC}$ only introduces a global phase that does not influence measurements, we can instead use the scaled cost function operator
\begin{equation}
    C = -\sum_{(u,v)\in E} Z_uZ_v.
\end{equation}

And the mixing operator $B$ equals 
\begin{equation}
    B = \sum_{v \in V} X_v
\end{equation}

Note that QAOA is a local algorithm where the expectation $\bra{\boldsymbol{\gamma},\boldsymbol{\beta}} Z_uZ_v \ket{\boldsymbol{\gamma},\boldsymbol{\beta}}$ on each edge only depends on the edges whose distance from edge $(u,v)$ are no more than $p$ and qubits on them. Therefore, edges in the same {$p$-neighborhood} subgraph 
have the same expectation values. {That is, $\bra{\boldsymbol{\gamma},\boldsymbol{\beta}} Z_uZ_v \ket{\boldsymbol{\gamma},\boldsymbol{\beta}} = \bra{\boldsymbol{\gamma},\boldsymbol{\beta}} Z_{u'}Z_{v'} \ket{\boldsymbol{\gamma},\boldsymbol{\beta}}$ if the edges $(u,v)$ and $(u',v')$ have the same neighborhood subgraph $g$, in which case we denote the value by $\bra{\boldsymbol{\gamma},\boldsymbol{\beta}} ZZ(g)\ket{\boldsymbol{\gamma},\boldsymbol{\beta}}$.} We can categorize edges in $E$ according to different subgraphs $g$, and the expectation becomes
\begin{equation}
{\bra{\boldsymbol{\gamma},\boldsymbol{\beta}} C_{MC} \ket{\boldsymbol{\gamma},\boldsymbol{\beta}} 
= \sum_{g} \frac{w_g}{2} \left(1-\bra{\boldsymbol{\gamma},\boldsymbol{\beta}} ZZ(g)\ket{\boldsymbol{\gamma},\boldsymbol{\beta}} \right),
}
\end{equation}
where {the summation is over all possible $p$-neighborhood subgraphs and $w_g$ is the number of edges $(u,v)$ whose $p$-neighborhood subgraph is $g$.}
The cut fraction is then
\begin{equation}\label{eq:expectation}
{
\frac{\bra{\boldsymbol{\gamma},\boldsymbol{\beta}} C_{MC} \ket{\boldsymbol{\gamma},\boldsymbol{\beta}}}{|E|}
= \frac12 - \sum_{g} \frac{f_g}{2}
\bra{\boldsymbol{\gamma},\boldsymbol{\beta}} ZZ(g)\ket{\boldsymbol{\gamma},\boldsymbol{\beta}},
}
\end{equation}
where $f_g$ is the proportion of edges with {$p$-neighborhood subgraph} $g$.

\subsection{\texorpdfstring{$X$}{X}-Ramanujan graphs}\label{sec:x-ramanujan}
We will also provide a brief introduction to the $X$-Ramanujan graphs in \cite{mohanty2020x}, the class of expander graphs that is studied by QAOA. First, we introduce the notion of spectral radius:
\begin{definition}\label{defn:spectral-radius}
For a graph $X$, the spectral radius of X is
\begin{equation}
    \rho(X):=\lim_{t\to\infty} \sup \left\{ (c_{uv}^{(t)})^{1/t} \right\},
\end{equation}
where $c_{uv}^{(t)}$ denotes the number of walks of length $t$ in $X$ from vertex $u\in V$ to vertex $v\in V$.
\end{definition}

Then, we introduce the definition of additive product graph. The additive product graph are generated by a set of graphs $A_1,\ldots,A_{c}$, called plain atoms, that share a common vertex set. By choosing a fixed starting vertex $v_1$, the vertices and edges of additive product graph are defined as follows, {with one example illustrated in Figure \ref{fig:additive-graph}.}
\begin{definition}[{\cite[Definition 3.4]{mohanty2020x}}]\label{defn:additive-graph}
Let $A_1,\ldots,A_{c}$ be plain atom {graphs} on {a} common vertex set $[n]$. Define {a} sum graph {$G = A_1+\cdots +A_c = ([n],E)$ with the edge set $E$ being} 
the union of the edge sets of $A_1,\ldots,A_{c}$. Assume that $G$ is connected. Letting $\underline{A}_{j}$ denote $A_j$ with isolated vertices removed, we also assume that each $\underline{A}_{j}$ is nonempty and connected. We now define the (typically infinite) additive product graph $A_1\oplus \cdots \oplus A_c:=(V,E)$ where $V$ and $E$ are constructed as follows.

Let $v_1$ be a fixed vertex in $[n];$ let $V$ be the set of strings of the form $v_1C_1v_2C_2\cdots v_kC_kv_{k+1}$ for $k\geqslant0$ such that:
\begin{enumerate}[label=\arabic*.]
    \item each $v_i$ is in $[n]$ and each $C_i$ is in $[c]$,
    \item $C_i \neq C_{i+1}$ for all $i<k$,
    \item $v_i$ and $v_{i+1}$ are both in $\underline{A}_{C_i}$ for all $i\leq k$,
\end{enumerate}
 and, let $E$ be the set of edges on vertex set $V$ such that for each string $s\in V$
 \begin{enumerate}[label=\arabic*.]
     \item  we let $\{sCu, sCv\} $ be in $E$ if $\{u, v\} $ is an edge in $\underline{A}_{C}$, 
     \item we let $\{sCu,sCuC^{\prime}v\}$ be in $E$ if $\{u,v\}$ is an edge in $\underline{A}_{C^{\prime}}$, and
     \item we let $\{v_1,v_1Cv\}$ be in $E$ if $\{v_1,v\}$ is an edge in $\underline{A}_C.$
 \end{enumerate}
\end{definition}

Note that different choices of $v_1$ generate isomorphic additive product graphs. In \fig{additive-graph}, one example generated by \defn{additive-graph} is depicted, where the generators $A_1,\ldots,A_{c}$ bring corresponding local structures within the additive product graph. Note that an additive product graph may not be regular. A simple example is when $G$ is a complete bipartite graph $K_{\Delta_1,\Delta_2}$, and $A_1,\ldots,A_c$ represent the associated single-edge graphs on $G$’s vertex set, {where $c = \Delta_1 \Delta_2$}. It can be observed that $A_1\oplus A_2 \cdots\oplus A_c$ construct an infinite ($\Delta_1,\Delta_2$)-biregular tree, {where any node in an odd level has degree $\Delta_1$ and any node in an even level has degree $\Delta_2$. See \fig{3-4-biregular-tree} for an illustration of the case of $(\Delta_1,\Delta_2) = (3,4)$.}
\begin{figure}[H]
    \centering
    \includegraphics[scale=0.5,page={3}]{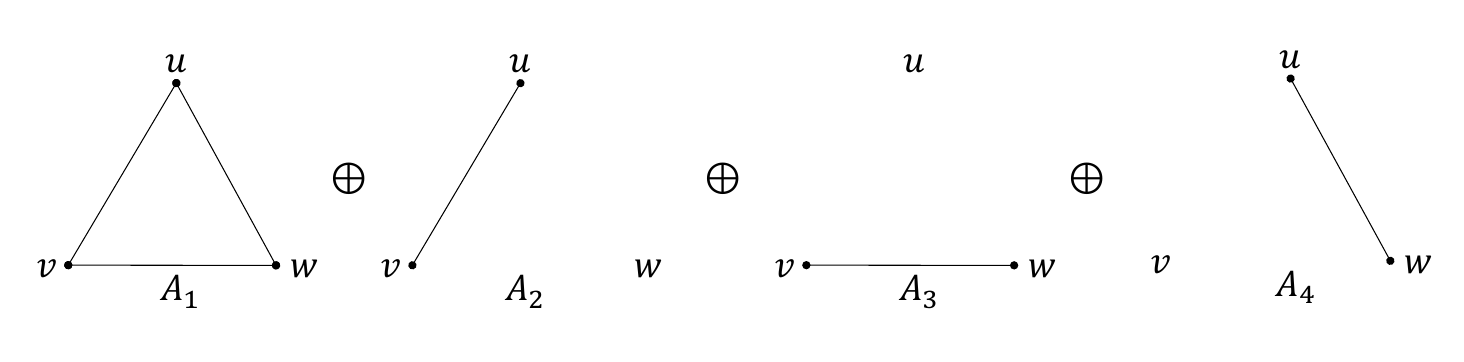}
    \caption{The infinite ($3,4$)-biregular tree constructed by the single-edge graphs on $K_{3,4}$.}
    \label{fig:3-4-biregular-tree}
\end{figure}

{
Additive product graphs can be easily constructed and proved to exhibit certain desirable spectral properties, by which they offer a family of explicit constructions of expander graphs. We refer to \cite{mohanty2020x} for more details on this. }

\begin{figure}[H]
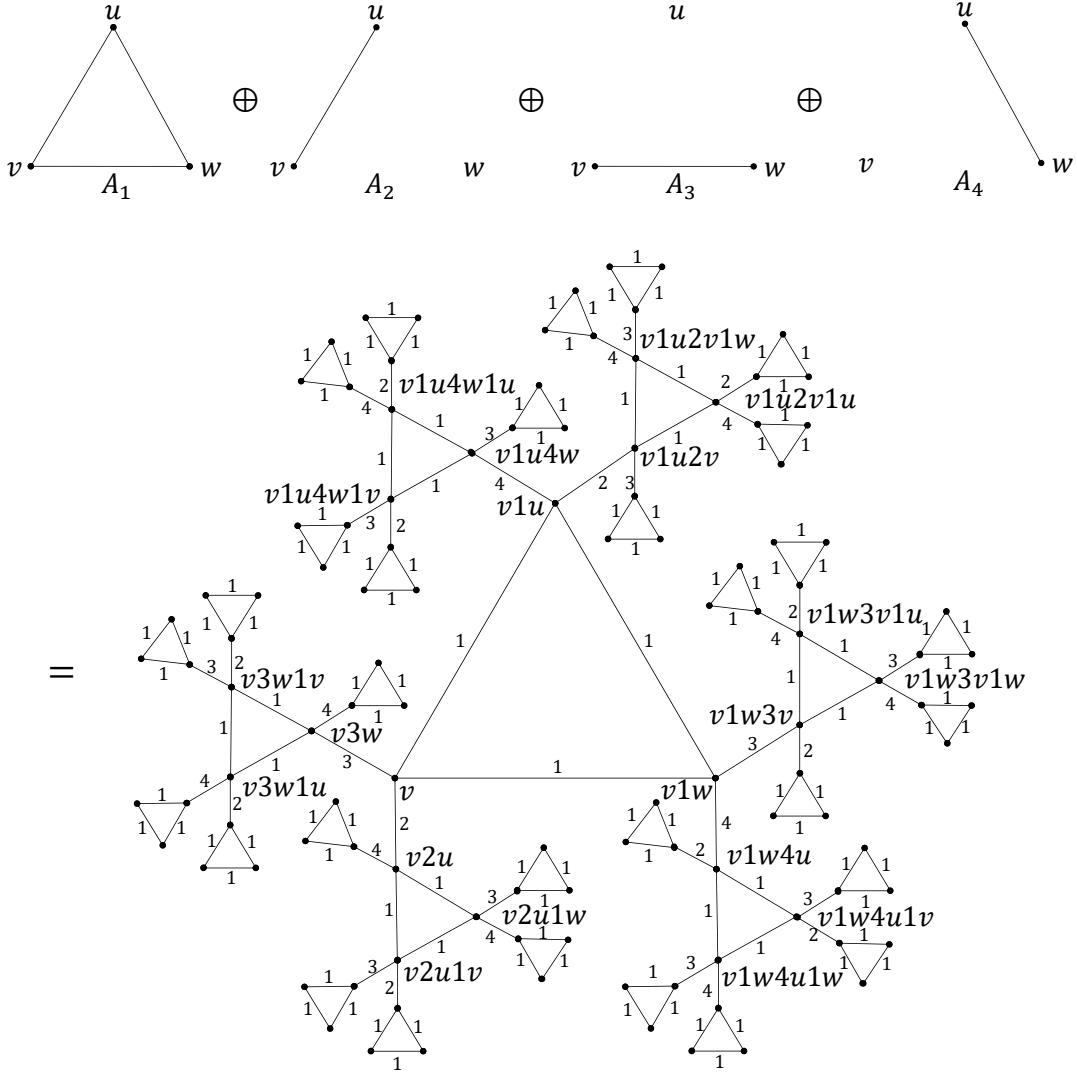

    \centering
    \includegraphics[scale=0.60,page={1}]{additive_product_graph.pdf}
    \includegraphics[scale=0.60,page={2}]{additive_product_graph.pdf}
    \caption{One example of the additive product graph. A portion of vertices on the graph are labeled with the corresponding string. }
    \label{fig:additive-graph}
\end{figure}

\subsection{The multi-angle quantum approximate optimization algorithm}
\label{sec:ma-qaoa}
{In the original QAOA paper \cite{farhi2014quantum}, the authors let each layer of quantum circuit share the same parameter. Later a multi-angle version of ma-QAOA was proposed \cite{herrman2022multi} in which each node $v$ or edge $(u,v)$ has an individual parameter in each layer $j$, namely $U(\beta,\gamma) = \prod_{j=p}^1 (e^{i\sum_{v\in V} \beta_{jv} X_v} e^{i\sum_{(u,v)\in E} \gamma_{j,(u,v)} Z_u Z_v})$. By introducing more parameters, the model becomes more expressive and can in principle improve the performance of the algorithm. However, more parameters also imply a much larger search space which may make training harder. Indeed, from a dynamical Lie algebra point of view, more parameter sharing shrinks the generated Lie algebra, while smaller dimension makes the circuit less likely to have a barren plateau and easier to estimate its gradient; see \cite{ragone2024lie,heidari2024efficient} for details. In this paper, we propose to have a \emph{structural parameter sharing} scheme and apply it to additive product graphs. }

The multi-angle quantum approximate optimization algorithm (ma-QAOA) was firstly introduced in Ref.~\cite{herrman2022multi}. It introduces more classical parameters into the standard QAOA framework. Typically, $C=\sum_a C_a$ and $B=\sum_v B_v$ represent the sum of a collection of clauses. The ma-QAOA assigns each individual summand of $B$ and $C$ its own parameter. The parameters consist of $2p$ vectors: $(\boldsymbol{\gamma}_1, \boldsymbol{\gamma}_2, \ldots, \boldsymbol{\gamma}_p)$ and $(\boldsymbol{\beta}_1, \boldsymbol{\beta}_2, \ldots, \boldsymbol{\beta}_p)$.
\begin{equation}
    U(C,\boldsymbol{\gamma})=e^{-i\sum_a C_a \gamma_a}=\prod_a e^{-i\gamma_a C_a} 
\end{equation}

\begin{equation}
    U(B,\boldsymbol{\beta})=e^{-i\sum_v C_v \beta_v}=\prod_v e^{-i\beta_v B_v}.
\end{equation}

Through having more classical parameters, the ma-QAOA can enhance the performance of standard QAOA.  
In Ref.~\cite{herrman2022multi}, ma-QAOA was simulated on a collection of one-hundred triangle-free 3-regular graphs with fifty vertices and one hundred triangle-free 3-regular graphs with 100 vertices. The result shows that ma-QAOA achieved 21.26\% and 17.98\% increase in percentage on average to those of 1-QAOA respectively. 
Furthermore, the Erd\Horig{o}s–Rényi graphs with 9-12 nodes and various edge probabilities were tested for ma-QAOA with $p>1$ \cite{gaidai2024performance}, demonstrating the advantages of ma-QAOA over QAOA for multiple layers.

In our analysis of additive product graphs, we aim to leverage the advantages of the ma-QAOA while reducing the number of parameters to ensure convergence. To achieve this, we categorize the edges in the additive product graph based on their constituent atoms. If an additive product graph is constructed by $c$ atoms, then there are $c$ types of edges. Specifically, edge type $C$ comprises edges of the following form:
\begin{equation}
    \{v,vCu\},\ \{sCu,sCv\},\ \{sC^{\prime}u,sC^{\prime}uCv\}.
\end{equation}
For each type of edge, we assign a separate $\gamma$ parameter to its corresponding summand. For the mixing operator $B$, we employ a single angle parameter, consistent with the original QAOA framework. That is, in our settings, if the edge set $E$ in graph $G$ has $c$ categories, the parameters are: $(\boldsymbol{\gamma}_1, \boldsymbol{\gamma}_2, \ldots, \boldsymbol{\gamma}_p), \boldsymbol{\gamma_i} \in [0,2\pi]^c$, $(\beta_1, \beta_2, \ldots, \beta_p),\beta \in [0,2\pi]$.

\subsection{Classical local algorithms for MaxCut}
In order to better understand the performance of QAOA on MaxCut, we need to explore classical MaxCut algorithms to provide appropriate benchmarks. Since QAOA is a local algorithm, the comparison between local classical algorithms and local quantum algorithms will be reasonable. Similar to local quantum algorithm, if a classical algorithm is $k$-local, then the spin of a vertex $u$ in graph is only related to the vertices within distance $k$ of $u$. The statement implies that for the output of the algorithm, the spins of points $u$ and $v$, which are separated by a distance greater than $2k$, are independent. The best classical local algorithms known to the authors are the $k$-local algorithm proposed by Barak and Marwaha \cite{barak2022classical} and the threshold algorithm used in Refs.~\cite{hirvonen2014large,hastings2019classical,marwaha2021local}. 
There are also other classical algorithms. For example, Ref.~\cite{montanari2021optimization} addressed MaxCut on Erd\Horig{o}s–Rényi graphs, while Ref.~\cite{el2023local} extended local tree-structure graphs to local treelike structures, where a fraction of the vertices has tree-like neighborhoods. Despite the graphs under study in this paper are not high-girth graphs, we can still use existing local classical algorithms to establish good benchmarks.

The threshold algorithm was first introduced in \cite{hirvonen2014large}. Intuitively, the algorithm simply flips the spin of each vertex based on the spins of its neighbors. For a parameter $\tau\in [d/2,d]$, if a vertex has more than $\tau$ neighbors with the same spin as itself, then the spin of that vertex is flipped. The pseudocode for this algorithm is shown in \algo{threshold-algo}. 
This work studied the cut fraction of $d$-regular triangle-free graphs and derived an expected cut fraction of $1/2 + 0.28125/\sqrt{d}$ with the threshold equal to $\lceil \frac{d+\sqrt{d}}{2} \rceil$. But the threshold given in \cite{hirvonen2014large} was for notational convenience and may not be the optimal. The optimal threshold can be obtained by directly analyzing the expression for the expected cut fraction as in \thm{cut-threshold}.

\begin{theorem}[{\cite[Section 2.6]{hirvonen2014large}}]
\label{thm:cut-threshold}
    The expected cut fraction of 1-step threshold algorithm on $d$-regular triangle-free graphs is:
    \begin{equation}
        \frac12+\frac1{4^{d-1}}\binom{d-1}{\tau-1}\sum_{i=d-\tau+1}^{\tau-1}\binom{d-1}i,
    \end{equation}
    where the threshold algorithm flips the spin of a vertex if more than $\tau$ of its neighbors has the same spin.
\end{theorem}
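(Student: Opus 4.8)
The statement is a clean combinatorial probability computation about a 1-step threshold algorithm on a $d$-regular triangle-free graph, so I would prove it by directly computing the expected contribution of a single edge to the cut and then invoking linearity of expectation together with the edge-transitivity of the local analysis (every edge sees the same local neighborhood structure because the graph is $d$-regular and triangle-free, so within distance $1$ of any edge the neighborhood is a fixed "double star" with no extra adjacencies). Fix an edge $(u,v)$. Initialize every vertex's spin uniformly and independently in $\{+1,-1\}$; the algorithm then flips $u$ iff strictly more than $\tau$ of $u$'s $d$ neighbors share $u$'s initial spin, and similarly for $v$, and — crucially — because the graph is triangle-free, $u$ and $v$ have no common neighbor, so the flip decision for $u$ depends on $u$ and $u$'s neighbors while the flip decision for $v$ depends on $v$ and $v$'s neighbors, and these two sets of "deciding" random variables overlap only in that $v$ is a neighbor of $u$ and $u$ is a neighbor of $v$.

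**Key steps, in order.** First I would set up the indicator $\mathbf{1}[\text{edge }(u,v)\text{ is cut after the flip}]$ and write its expectation as a sum over the four initial-spin patterns of $(u,v)$, each with probability $1/4$. By the $\pm$ symmetry of the whole process it suffices to analyze, say, the case where $u,v$ start with equal spins (probability $1/2$) and the case where they start with opposite spins (probability $1/2$), and track the probability that the final spins end up opposite. Second, condition on the initial spin of $u$ (WLOG $+1$) and on the initial spin of $v$; let $K_u$ be the number of the remaining $d-1$ neighbors of $u$ (excluding $v$) that are $+1$, and similarly $K_v$ among the $d-1$ neighbors of $v$ excluding $u$; these are independent $\mathrm{Binomial}(d-1,1/2)$ variables, independent of the spin of $v$. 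Third, express the flip event of $u$ purely in terms of $K_u$ and the spin of $v$: e.g.\ if $v$ is also $+1$ then $u$ has $K_u+1$ like-neighbors and flips iff $K_u+1>\tau$, i.e.\ $K_u\ge\tau$; if $v$ is $-1$ then $u$ flips iff $K_u>\tau$, i.e.\ $K_u\ge\tau+1$. Do the analogous bookkeeping for $v$. Fourth, assemble: the edge is cut iff exactly one of $\{$final spin of $u$, final spin of $v\}$ equals $+1$; enumerate the (small, finite) set of joint cases for the initial spin of $v$ and for whether each endpoint flips, collect the binomial-tail probabilities $\sum_i\binom{d-1}{i}2^{-(d-1)}$, and simplify. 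The factor $\binom{d-1}{\tau-1}$ should emerge as the probability of the "boundary" events ($K=\tau-1$, i.e.\ the endpoint is on the knife-edge between flipping and not), which is exactly where the asymmetry between the two initial configurations of the edge creates a net gain over $1/2$; the residual sum $\sum_{i=d-\tau+1}^{\tau-1}\binom{d-1}{i}$ collects the range of $K$-values for the other endpoint that turns that boundary event into a cut, and the $4^{-(d-1)}=2^{-(d-1)}\cdot 2^{-(d-1)}$ accounts for the two independent binomials (with one leftover factor of $1/2$ absorbed into the $1/2$ from conditioning on the initial edge configuration, matching the stated $1/4^{d-1}$ rather than $1/2\cdot 4^{-(d-1)}$ — I would double-check this normalization carefully).

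**Main obstacle.** The genuine content is not any single step but the careful case enumeration in the fourth step: there are several nearly-symmetric sub-cases (initial edge cut vs.\ uncut; $u$ flips or not; $v$ flips or not) and one must combine the corresponding binomial partial sums without sign or off-by-one errors, using the identity $\binom{d-1}{i}=\binom{d-1}{d-1-i}$ to fold the tails together so that the cross terms telescope into the single product $\binom{d-1}{\tau-1}\sum_{i=d-\tau+1}^{\tau-1}\binom{d-1}{i}$. In particular, keeping straight which inequality is strict ("more than $\tau$") versus non-strict at each endpoint, and which of the $d$ neighbors is the partner vertex $v$ (hence the $d-1$ rather than $d$ in the binomials), is where the proof is most error-prone. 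I would organize this as a short table of the four initial spin patterns of $(u,v)$, write the cut probability in each as an explicit double sum over $(K_u,K_v)$, add them with weight $1/4$, and then simplify; the triangle-free hypothesis is used exactly once, to guarantee independence of $K_u$ and $K_v$, and $d$-regularity is used to make the per-edge answer uniform so that linearity of expectation gives the stated cut fraction verbatim.
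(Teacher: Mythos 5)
Your approach is correct and is essentially the standard one; note that the paper itself does not prove this statement but imports it from Hirvonen et al.\ (Section 2.6), so there is no internal proof to compare against. Your per-edge analysis --- conditioning on the two endpoint spins, using triangle-freeness to obtain two independent $\mathrm{Binomial}(d-1,1/2)$ counts $K_u,K_v$, and reducing the cut probability to $\tfrac12+(p-q)(1-p-q)$ with $p=\Pr[K\ge \tau-1]$ and $q=\Pr[K\ge \tau]$ --- is exactly how the formula arises: $p-q=\binom{d-1}{\tau-1}2^{-(d-1)}$ is the boundary event you identify, and $1-p-q=\Pr[d-\tau\le K\le\tau-2]$ folds via $\binom{d-1}{i}=\binom{d-1}{d-1-i}$ into $\sum_{i=d-\tau+1}^{\tau-1}\binom{d-1}{i}\,2^{-(d-1)}$. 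The one substantive caveat is the point you already flagged: the theorem's prose (``more than $\tau$'') conflicts with Algorithm 1's ``at least $\tau$,'' and the displayed formula corresponds to the \emph{non-strict} rule; your literal strict reading shifts every index by one and would yield $\binom{d-1}{\tau}\sum_{i=d-\tau}^{\tau}\binom{d-1}{i}$ instead. Also, your worry about a leftover factor of $\tfrac12$ is unfounded: weighting the equal-spin case by $\tfrac12\cdot 2p(1-p)$ and the opposite-spin case by $\tfrac12\bigl(q^2+(1-q)^2\bigr)$ gives an excess over $\tfrac12$ of exactly $(p-q)(1-p-q)$, so the normalization is precisely $4^{-(d-1)}$ with no residual constant.
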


Note that to maximize the expected cut fraction in \thm{cut-threshold}, the threshold $\tau$ needs to be optimized. 
Hastings \cite{hastings2019classical} numerically optimized the cut fraction of 1-step threshold algorithm and found it outperformed $\text{QAOA}_1$ for $3<d<1000$ except of $d=3,4,6,11$. Marwaha \cite{marwaha2021local} extended the 1-step threshold algorithm to an n-step version and directly calculated the optimal threshold $\tau_1, \tau_2$ for all $41<d<500$ by assuming $\tau_1=\tau_2$. The result showed that the 2-step threshold algorithm outperformed $\text{QAOA}_2$ for all $41<d<500$ and the performance of 2-step threshold algorithm stabilized at $1/2 + 0.417/\sqrt{d}$ for large $d$. The results were also computed for $
2 \leq d \le 50$ without assuming $\tau_1=\tau_2$, and found that 2-step threshold algorithm outperformed $\text{QAOA}_2$ when $d > 5$.

\vspace{4mm}
\begin{algorithm}[H]
    \SetAlgoLined
    \caption{Threshold Algorithm}
    \label{algo:threshold-algo}
    \KwIn{graph $G$, thresholds $\tau_1, \tau_2,\ldots, \tau_n$}
    \Begin{
    \For{every vertex $u \in G$}{
    Randomly assign a spin $+1$ or $-1$\;
    }
    \For{$i \leftarrow 1$ \KwTo $n$}{
        \For{every vertex $u \in G$}{
        \If{$u$ has the same spin as $\geq \tau_i$ of its neighbors}{
        flip the spin of $u$ after updating all vertices in this round\;
        }
        }
    }
    Cut the graph according to $+1$ or $-1$ partition\;
    \KwOut{the cut}
    }
\end{algorithm}
\vspace{4mm}

The algorithm in \cite{barak2022classical} by Barak and Marwaha achieved better bound than the threshold algorithm for 1-step and 2-step cases on high-girth regular graphs. The algorithm is done by assigning an independent random normal variable to each vertex and define the spin of each vertex according to the summation of these random variables. The pseudocode of this algorithm is shown in \algo{barak}. This algorithm achieved an expected cut fraction of at least $ 1/2+2/( \pi \sqrt {D})$ for $D$-regular graphs with girth $g > 2k+ 1$.

\begin{theorem}[{\cite[Theorem 4]{barak2022classical}}]
     For every $k$, there is a $k$-local algorithm $A$ such that for all $D$-regular $n$-vertex graphs $G$ with girth $g > 2k+ 1$, $A$ outputs a cut $x \in \{ \pm 1\} ^n $ cutting $\cos ^{- 1}( - 2\sqrt {D- 1}/ D) / \pi - O( 1/ \sqrt {k}) > 1/ 2+ 2/ ( \pi \sqrt {D}) - O( 1/ \sqrt {k})$ fraction of edges.
\end{theorem}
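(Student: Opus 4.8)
The plan is to follow the Gaussian rounding strategy of Barak–Marwaha: to each vertex $v$ associate an i.i.d.\ standard normal $g_v$, and for a small parameter to be chosen, set the real-valued ``pre-spin'' $y_v = g_v - \lambda \sum_{u \sim v} g_u$ (a degree-$1$ local average; the general $k$-local version iterates this averaging $k$ times, or equivalently applies a degree-$k$ polynomial in the adjacency operator to the Gaussian vector). Then round $x_v = \operatorname{sign}(y_v)$. This is manifestly $k$-local since $x_v$ depends only on $g_u$ for $u$ within distance $k$ of $v$, hence $x_u, x_v$ are independent when $\operatorname{dist}(u,v) > 2k$. For a fixed edge $(u,v)$, the pair $(y_u, y_v)$ is jointly Gaussian with an explicit correlation $\varrho = \varrho(D,k,\lambda)$ determined by the local tree structure (here the girth hypothesis $g > 2k+1$ is essential: the radius-$k$ neighborhoods of $u$ and $v$ are the two complete $D$-ary trees joined at the edge, so $\varrho$ is a universal function of $D$ and $k$ only). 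By the classical Gaussian orthant / Sheppard formula, $\Pr[x_u \neq x_v] = \tfrac{1}{2} + \tfrac{1}{\pi}\arcsin(-\varrho) = \tfrac{\cos^{-1}(\varrho)}{\pi}$, so the expected cut fraction is exactly $\cos^{-1}(\varrho)/\pi$.

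The next step is to optimize $\varrho$ over the choice of averaging polynomial. Writing $y = q(\mathsf{A})g$ for a polynomial $q$ of degree $k$, where $\mathsf{A}$ is the (infinite $D$-ary tree) adjacency operator, one computes $\operatorname{Var}(y_v) = \langle e_v, q(\mathsf{A})^2 e_v\rangle$ and $\operatorname{Cov}(y_u, y_v) = \langle e_v, q(\mathsf{A})^2 \mathsf{A} e_v\rangle$ via the moments of the walk-generating structure on the tree, so that $-\varrho = -\langle q\mathsf{A}^{1/2}\!\cdot, q\mathsf{A}^{1/2}\!\cdot\rangle/\|q\cdot\|^2$ type ratio; the spectral measure of $\mathsf{A}$ on the $D$-regular tree is the Kesten–McKay measure supported on $[-2\sqrt{D-1}, 2\sqrt{D-1}]$. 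Maximizing $\Pr[x_u\neq x_v]$ amounts to choosing $q$ of degree $k$ to push the normalized quadratic form $\mathbb{E}_\mu[\lambda q(\lambda)^2]/\mathbb{E}_\mu[q(\lambda)^2]$ as close as possible to the left edge $-2\sqrt{D-1}$ of the spectrum. Standard Chebyshev-type extremal polynomial arguments (using the orthogonal polynomials for Kesten–McKay, which are rescaled Chebyshev polynomials) give that this ratio can be made $-2\sqrt{D-1} + O(1/k)$ as $k\to\infty$, uniformly, so $-\varrho \ge 2\sqrt{D-1}/D - O(1/\sqrt{k})$ after normalization, yielding the stated cut fraction $\cos^{-1}(-2\sqrt{D-1}/D)/\pi - O(1/\sqrt{k})$.

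Finally, for the comparison inequality $\cos^{-1}(-2\sqrt{D-1}/D)/\pi > 1/2 + 2/(\pi\sqrt{D})$, I would set $t = 2\sqrt{D-1}/D \in (0,1)$ and note $\cos^{-1}(-t)/\pi = 1/2 + \arcsin(t)/\pi$, so the claim reduces to $\arcsin(t) > 2/\sqrt{D}$; since $\arcsin(t) \ge t$ and $t = 2\sqrt{D-1}/D = 2\sqrt{D-1}/D$, one checks $2\sqrt{D-1}/D \ge 2/\sqrt{D} \iff \sqrt{D-1} \ge \sqrt{D-1}$... more carefully $2\sqrt{D-1}/D$ vs $2/\sqrt D$: this is $\sqrt{D-1}/\sqrt D \cdot 2/\sqrt D$ versus $2/\sqrt D$, which fails, so one instead uses the sharper $\arcsin(t) > t + t^3/6$ and the fact that $t = 2\sqrt{D-1}/D$ expands as $2/\sqrt D - 1/D^{3/2} + O(D^{-5/2})$, whose cube contributes the needed correction; this is an elementary but slightly delicate real-analysis estimate.

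\medskip
\noindent\textbf{Main obstacle.} The crux is the extremal-polynomial step: proving that degree-$k$ polynomials against the Kesten–McKay measure can drive the Rayleigh-type quotient to within $O(1/k)$ of the spectral edge, with the error bound uniform in $D$. This requires the right choice of test polynomial (essentially Chebyshev polynomials shifted to $[-2\sqrt{D-1},2\sqrt{D-1}]$, possibly with a boundary-layer modification) and a careful tail estimate on the Kesten–McKay measure near its edge; all the other pieces (locality, Gaussian orthant formula, the final numerical inequality) are routine by comparison.
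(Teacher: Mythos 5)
This theorem is not proved in the paper at all: it is imported verbatim from Barak and Marwaha (cited as Theorem~4 of that work), and the paper's only contribution here is to reproduce the algorithm explicitly as \algo{barak}. So the comparison must be against the original argument, whose algorithm you can actually read off from \algo{barak}: the pre-spin at $u$ is $\sum_{w:\,d(w,u)\le k}(-1)^{d(w,u)}(D-1)^{-d(w,u)/2}Y_w$ with \emph{fixed} weights, not an optimized degree-$k$ polynomial. Your overall skeleton is the right one and matches the first half of the real proof: i.i.d.\ Gaussians, sign of a radius-$k$ linear statistic (hence $k$-locality and independence beyond distance $2k$), the girth hypothesis forcing the radius-$k$ neighborhoods to be trees so that the correlation $\varrho$ depends only on $D$ and $k$, and Sheppard's formula $\Pr[x_u\ne x_v]=\cos^{-1}(\varrho)/\pi$. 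Where you diverge is in declaring the extremal-polynomial step against the Kesten--McKay measure to be the crux. In the actual proof there is no such step: with the explicit weights above, one just counts vertices in the two subtrees hanging off the edge $(u,v)$. Each vertex at depth $m$ from its nearer endpoint contributes $-(D-1)^{m}(D-1)^{-(2m+1)/2}=-(D-1)^{-1/2}$ to the covariance (the sign is always $-1$ because $d(w,u)+d(w,v)$ is odd on a tree), giving $\mathrm{Cov}=-2k/\sqrt{D-1}$ and $\mathrm{Var}=1+kD/(D-1)$, hence $\varrho=-2k\sqrt{D-1}/(kD+D-1)\to -2\sqrt{D-1}/D$ at rate $O(1/k)$; the degradation to $O(1/\sqrt{k})$ comes from the square-root modulus of continuity of $\cos^{-1}$ near $-1$ (relevant for small $D$, where $2\sqrt{D-1}/D$ is close to $1$). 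Your route via optimizing a Rayleigh quotient toward the spectral edge would also work and explains \emph{why} these weights are the right ones, but it replaces a two-line count with genuinely harder analysis, so it mislocates the difficulty.

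One concrete gap in your final step: the bound $\arcsin(t)>t+t^3/6$ with $t=2\sqrt{D-1}/D$ does \emph{not} suffice to prove $\arcsin(t)>2/\sqrt{D}$ for all $D$. A short computation shows $t+t^3/6>2/\sqrt{D}$ only for $D\ge 7$ (e.g.\ at $D=4$ one gets $0.974<1$, and at $D=2$ one gets $7/6<\sqrt{2}$), so the cases $2\le D\le 6$ must be verified directly (they do hold, e.g.\ $\arcsin(1)=\pi/2>\sqrt{2}$ for $D=2$) or one must retain more terms of the $\arcsin$ series. This is easily repaired but as written the argument is incomplete for small degrees.
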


\vspace{4mm}
\begin{algorithm}[H]
    \SetAlgoLined
    \caption{Barak and Marwaha's Algorithm}
    \label{algo:barak}
    \KwIn{graph $G$}
    \Begin{
    \For{every vertex $w \in G$}{
    $Y_\omega \leftarrow N(0,1)$\;
    }
    \For{every vertex $u \in G$}{
    $X_u \leftarrow \mathrm{sgn}\left(\sum_{w;d(w,u)\leq k}(-1)^{d(w,u)}(D-1)^{-0.5d(w,u)}Y_w\right)$\;
    \tcp{$d(w,u)$ is the graph distance from u to w}
    }
    \KwOut{the vector $X$}
    }
\end{algorithm}
\vspace{4mm}

\section{Theoretical Guarantee of QAOA for Additive Product Graphs}\label{sec:theories}
\subsection{Classical MaxCut}
In this section, we give a theoretical guarantee for the expected cut fraction of QAOA for MaxCut on additive product graphs. Specifically, we propose an iterative formula to analyze each expectation term $\bra{\boldsymbol{\gamma},\boldsymbol{\beta}} Z_uZ_v \ket{\boldsymbol{\gamma},\boldsymbol{\beta}}$ in Eq.~\eq{expectation}. The analysis of the iterative formula is conducted within the framework of ma-QAOA discussed \sec{ma-qaoa}. 
When letting ${\gamma}_{i,1} = {\gamma}_{i,2}=\cdots={\gamma}_{i,c}$, we will have the iterative formula for the standard QAOA.

First, we need to analyze how many types of subgraphs exist for a given construction of additive product graph and what the proportion ($f_g$ in Eq.~\eq{expectation}) is for each type. 
The type of subgraphs for $\text{ma-QAOA}_p$ depends on the edge $(u,v)$ in $\bra{\boldsymbol{\gamma},\boldsymbol{\beta}} Z_uZ_v \ket{\boldsymbol{\gamma},\boldsymbol{\beta}}$ and the neighbors within distance $p$ from vertices $u$ and $v$. Based on the construction of additive product graph, every vertex $sCu$ in $V$ is connected to its neighbors following the same pattern: 
Firstly $sCu$ has neighbors $sCv$ of the same layer if $\{u,v\}$ is an edge in $\underline{A}_C$; then $sCu$ is connected to its successions $sCuC'v$ if $\{u,v\}$ is an edge in $\underline{A}_{C'}$ and its predecessors $s'C'v$ if $s = s'C'v$ and $\{u,v\}$ is an edge in $\underline{A}_C$. As a consequence, all vertices $sCu$ exhibit identical neighborhood configurations within the additive product graph regardless of choice of $s$. So the following property will hold:
\begin{proposition}
\label{prop:subgraph}
The expectation 
$\bra{\boldsymbol{\gamma},\boldsymbol{\beta}} Z_{a}Z_{b} \ket{\boldsymbol{\gamma},\boldsymbol{\beta}}$ is invariant under the following {three different forms of an edge} 
$\{a,b\}$ for any $s$ and any $C^\prime$:
\begin{equation}
    \{v,vCu\},\ \{sCu,sCv\},\ \{sC^{\prime}u,sC^{\prime}uCv\}.
\end{equation}
Consequently, each edge in generators $\underline{A}_1,\ldots,\underline{A}_c$ (in \defn{additive-graph}) corresponds to a unique isomorphic subgraph in the additive product graph. For an edge $\{u,v\}$ in $\underline{A}_C$, we use $\mathbb{E}[\underline{A}_C(u,v)]$ to denote the expectation for any $s$ and $C^\prime$ in the additive product graph.
\end{proposition}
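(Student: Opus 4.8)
\emph{Proof proposal.} The plan is to reduce the statement to a purely graph-theoretic fact about neighbourhoods, exploiting that ma-QAOA is a local algorithm. First I would invoke the standard light-cone argument: conjugating $Z_aZ_b$ through the $p$ alternating layers $U(B,\boldsymbol{\beta})$, $U(C,\boldsymbol{\gamma})$ of Eq.~\eq{QAOA} only spreads its support to qubits within graph distance $p$ of $\{a,b\}$, because each of $U(B,\boldsymbol{\beta})$ and $U(C,\boldsymbol{\gamma})$ is a product of commuting unitaries supported on single vertices and single edges. Hence $\bra{\boldsymbol{\gamma},\boldsymbol{\beta}} Z_aZ_b \ket{\boldsymbol{\gamma},\boldsymbol{\beta}}$ is a function only of the $p$-neighbourhood subgraph of $\{a,b\}$ (in the sense of \sec{prelim-QAOA}), decorated with the marked edge and with every edge labelled by the atom $\underline{A}_j$ it comes from; the atom labels matter because, under the structural parameter-sharing of \sec{ma-qaoa}, the angle attached to an edge in layer $i$ depends on its atom, whereas for ordinary QAOA even the labels are irrelevant. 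Crucially the $[n]$-labels of the vertices play no role, since they do not enter $B$ or $C$. So it suffices to show that for a fixed atom $\underline{A}_C$ and a fixed edge $\{u,v\}\in\underline{A}_C$, every occurrence of this edge in $X$ — in particular each of the three forms $\{v,vCu\}$, $\{sCu,sCv\}$, $\{sC'u,sC'uCv\}$ — has the same decorated $p$-neighbourhood for all $p$.

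To establish this I would make precise the ``tree of atoms'' picture implicit in \defn{additive-graph} and in the paragraph preceding the proposition. Reading off the three edge rules, every edge of $X$ lies in exactly one copy of some $\underline{A}_j$, any two such copies meet in at most a single vertex, and these copies form the blocks of a tree (so every cycle of $X$ lies inside one copy). Moreover the attachment rule is uniform: a vertex $w=sCu$ lies in the copy of $\underline{A}_C$ indexed by the prefix $sC$ at the position of $u$, and for every atom $C'\neq C$ with $u\in\underline{A}_{C'}$ it is the entry vertex of a fresh copy of $\underline{A}_{C'}$ at the position of $u$; this list of incident copies depends only on the position $u\in[n]$, never on $s$ or $C$ (and the same description, with ``position $v_1$'', applies at the root, so $v_1$ is not exceptional). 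Unfolding the neighbourhood by breadth-first search from the edge $\{a,b\}$ and inducting on $p$, one sees that the decorated $p$-neighbourhood is completely determined by the data $(C,\{u,v\})$: the innermost layer is the relevant fragment of the copy of $\underline{A}_C$ carrying $\{a,b\}$, and every further vertex expands according to the position-dependent rule above, which cannot detect where in $X$ we started. Since the three forms in the statement are, by construction, occurrences of the same atom-edge $\{u,v\}\in\underline{A}_C$ arising from edge rules 3, 1, and 2 respectively, they share the same decorated $p$-neighbourhood and hence the same value of $\bra{\boldsymbol{\gamma},\boldsymbol{\beta}} Z_aZ_b \ket{\boldsymbol{\gamma},\boldsymbol{\beta}}$. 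This also yields the last sentence: the common value depends only on $(C,\{u,v\})$, so it may unambiguously be denoted $\mathbb{E}[\underline{A}_C(u,v)]$, and distinct atom-edges index (possibly coinciding, but well-defined) isomorphism classes of decorated subgraphs.

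I expect the main obstacle to be the bookkeeping in the structural step: verifying directly from the string-based description in \defn{additive-graph} that the atom-copies really do form a tree with single-vertex intersections, and that the copies incident to $w=sCu$ depend only on $u$. One has to track both directions of edge rule 2 (a vertex is the entry point of its ``downward'' copies but an interior vertex of the ``upward'' one), check that this directedness is invisible to the undirected neighbourhood, and confirm that the apparent asymmetry at the root $v_1$ is only apparent. Once this homogeneity is in place, the locality argument and the reduction above are routine.
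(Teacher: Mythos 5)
Your proposal is correct and follows essentially the same route as the paper, which justifies the proposition by combining the locality of (ma-)QAOA with the observation that every vertex $sCu$ of the additive product graph has an identical neighborhood configuration (one incident copy of $\underline{A}_{C'}$, entered at position $u$, for each atom $C'$ containing $u$) independently of $s$ and $C$. Your write-up is in fact more careful than the paper's informal paragraph --- in particular the atom-labelled decoration needed for the structural parameter sharing, the block-tree structure of the atom copies, and the check that the root $v_1$ is not exceptional --- but the underlying argument is the same.
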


Since the number of types of subgraph is the number of edges in $\underline{A}_1,\ldots,\underline{A}_c$, proportion $f_g$ for each type is then $\frac{1}{|E({\underline{A}_1})|+\cdots+|E({\underline{A}_c})|}$. And this yields the expectation of cut fraction \eq{expectation} to be
\begin{equation}\label{eq:cut-fraction}
    \frac{\bra{\boldsymbol{\gamma},\boldsymbol{\beta}} C_{MC} \ket{\boldsymbol{\gamma},\boldsymbol{\beta}}}{|E|}
    = \frac12 + \frac{1}{2(|E({\underline{A}_1})|+\cdots+|E({\underline{A}_c})|)} \sum_{\substack{C \in [c], \\ (a,b) \in \underline{A}_C}} 
    \mathbb{E}[\underline{A}_C(a,b)],
\end{equation}
where $E(\underline{A})$ represent the number of edges in graph $\underline{A}$.

\paragraph{Notations.}
We will introduce some notations to be used in the following. For $\text{ma-QAOA}_p$ algorithms, we denote a vector of length $2p+1$ as
\begin{equation}\label{eq:a}
    \boldsymbol{a}=(a_1,a_2,\ldots,a_p,a_0,a_{-p},\ldots,a_{-2},a_{-1}),\ a_i \in \{+1,-1\}.
\end{equation}

And for the $p$ parameter vector $(\boldsymbol{\gamma}_1, \boldsymbol{\gamma}_2,\ldots,\boldsymbol{\gamma}_p)$ in $\text{ma-QAOA}_p$, we also introduce a vectorized notation. Suppose there are $c$ atom graph building the additive product graph, then we have
\begin{equation}\label{eq:Gamma}
\begin{aligned}
    &\Gamma_1 = ({\gamma}_{1,1}, {\gamma}_{2,1},\ldots,{\gamma}_{p,1},0,-{\gamma}_{p,1},\ldots,-{\gamma}_{2,1},-{\gamma}_{1,1}) \\
    &\Gamma_2 = ({\gamma}_{1,2}, {\gamma}_{2,2},\ldots,{\gamma}_{p,2},0,-{\gamma}_{p,2},\ldots,-{\gamma}_{2,2},-{\gamma}_{1,2}) \\
    &\vdots \\
    &\Gamma_c = ({\gamma}_{1,c}, {\gamma}_{2,c},\ldots,{\gamma}_{p,c},0,-{\gamma}_{p,c},\ldots,-{\gamma}_{2,c},-{\gamma}_{1,c}).
\end{aligned}
\end{equation}

For $p$ parameter $(\beta_1, \beta_2,\ldots,\beta_p)$, we define a function of $\boldsymbol{a}$
\begin{equation}\label{eq:f}
    \begin{aligned}
    f\left( \boldsymbol{a} \right) &= \frac12\bra{ a_1} e^{i\beta_1X} \ket{a_2} \cdots 
    \bra{a_{p-1}} e^{i\beta_{p-1} X} \ket{a_p} 
    \bra{a_p} e^{i\beta_p X} \ket{a_0} \\
    & \ \ \times \bra{a_0} e^{-i\beta_p X} \ket{a_{-p}}
    \bra{a_{-p}} e^{-i\beta_{p-1} X} \ket{a_{-(p-1)}} \cdots 
    \bra{a_{-2}} e^{-i\beta_1 X} \ket{a_{-1}} 
    \end{aligned}.
\end{equation}

\paragraph{Iterations for infinite additive product graph \texorpdfstring{$X$}{X}.}
Let us consider $p=2$ and take the upper graph in \fig{additive-graph} for example. There are technically six kinds of expectation to be computed: 
\begin{align}
\mathbb{E}[\underline{A}_1(u,v)],\ \mathbb{E}[\underline{A}_1(v,w)],\ \mathbb{E}[\underline{A}_1(w,u)],\ \mathbb{E}[\underline{A}_2(u,v)],\ \mathbb{E}[\underline{A}_3(v,w)],\ \mathbb{E}[\underline{A}_4(u,w)]. 
\end{align}
Here, we give the iterative formula of $\mathbb{E}[\underline{A}_1(v,w)]$ as an example.

\begin{figure}[H]
    \centering
    \includegraphics[scale=0.6]{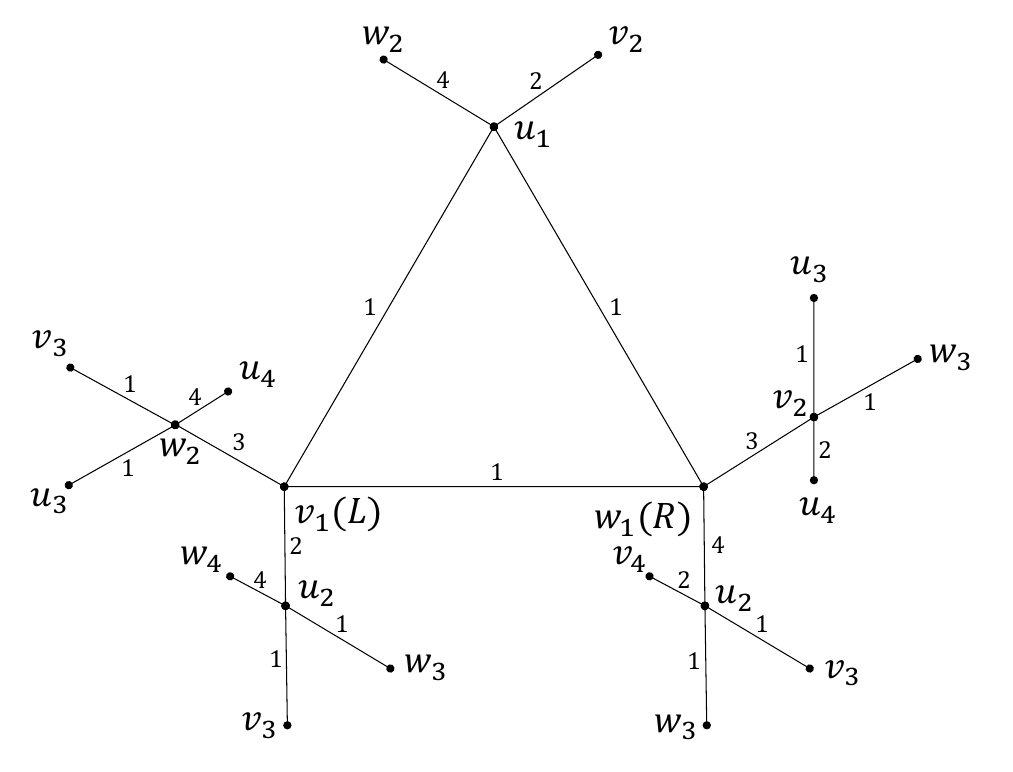}
    \caption{The subgraph in additive product graph corresponding to the edge $\{v,w\}$ in $\underline{A}_1$ at $p=2$. The labels of vertices and edges come from the generation of additive product graph. 
    }
    \label{fig:subgraph}
\end{figure}

The subgraph associated with $\mathbb{E}[\underline{A}_1(v,w)]$ is shown in \fig{subgraph}. This subgraph has $n=21$ nodes, and we restrict the state $\ket{\boldsymbol{\gamma},\boldsymbol{\beta}}$ on these $n$ nodes. To avoid confusion in the name of vertices, we denote the edge $(v_1,w_1)$ we are seeking the expectation for as $(L,R)$. In the following, we evaluate the expectation of $\bra{\boldsymbol{\gamma},\boldsymbol{\beta}} Z_LZ_R \ket{\boldsymbol{\gamma},\boldsymbol{\beta}}$. Following the line of computation in \cite{basso2022quantum}, we insert 5 complete sets $\{\boldsymbol{z}^{[1]},\boldsymbol{z}^{[2]},\boldsymbol{z}^{[0]},\boldsymbol{z}^{[-2]},\boldsymbol{z}^{[-1]}\}$ in the computational Z-basis, with each $z^{[i]}$ ranging over all $2^n$ elements of $\{+1,-1\}^n$. 
\begin{equation}\label{eq:expec}
\begin{aligned}
\bra{\boldsymbol{\gamma},\boldsymbol{\beta}} Z_LZ_R \ket{\boldsymbol{\gamma},\boldsymbol{\beta}}
    &=\sum_{ \{\boldsymbol{z}^{[i]}\} } \braket{s|\boldsymbol{z}^{[1]}} 
    U(C(\boldsymbol{z}^{[1]}),\boldsymbol{\gamma}_1)
    \bra{\boldsymbol{z}^{[1]}} e^{i\beta_1 B} \ket{\boldsymbol{z}^{[2]}} 
    U(C(\boldsymbol{z}^{[2]}),\boldsymbol{\gamma}_2)
   \bra{\boldsymbol{z}^{[2]}} e^{i\beta_2 B} \ket{\boldsymbol{z}^{[0]}} z_L^{[0]}z_R^{[0]} \\
    &\times \bra{\boldsymbol{z}^{[0]}} e^{-i\beta_2 B} \ket{\boldsymbol{z}^{[-2]}}
    U(C(\boldsymbol{z}^{[2]}),-\boldsymbol{\gamma}_2)
    \bra{\boldsymbol{z}^{[-2]}} e^{-i\beta_1 B} \ket{\boldsymbol{z}^{[-1]}}
    U(C(\boldsymbol{z}^{[1]}),-\boldsymbol{\gamma}_1)
    \braket{\boldsymbol{z}^{[-1]}|s}\\
    &=\frac{1}{2^n} \sum_{\{\boldsymbol{z}^{[i]}\}}
    U(C(\boldsymbol{z}^{[1]}),\boldsymbol{\gamma}_1)
    U(C(\boldsymbol{z}^{[2]}),\boldsymbol{\gamma}_2)
    U(C(\boldsymbol{z}^{[2]}),\boldsymbol{-\gamma}_2)
    U(C(\boldsymbol{z}^{[1]}),\boldsymbol{-\gamma}_1)
    z_L^{[0]}z_R^{[0]} \\
    &\times \prod_{k=1}^n \bra{ z_k^{[1]}} e^{i\beta_1 X}\ket{z_k^{[2]}}
   \bra{z_k^{[2]}} e^{i\beta_2 X} \ket{z_k^{[0]}}
   \bra{z_k^{[0]}} e^{-i\beta_2 X} \ket{ z_k^{[-2]}} 
   \bra{z_k^{[-2]}} e^{-i\beta_1 X} \ket{z_k^{[-1]}},
\end{aligned}
\end{equation}
where 
\begin{equation}
\ket{s} = \ket{+}^{\otimes n},\quad
    U(C(\boldsymbol{z}^{[i]}),\boldsymbol{\gamma}_i) =  
    \exp\left(i \sum_{(k_1,k_2) \in E} \boldsymbol{\gamma}_{i,c(k_1,k_2)} (-\boldsymbol{z}^{[i]}_{k_1} \boldsymbol{z}^{[i]}_{k_2})\right) ,
\end{equation}
with the subscript $c_{(k_1,k_2)}$ of $\boldsymbol{\gamma}_{i,c(k_1,k_2)}$ represent the category of edge $(k_1,k_2)$.

\newpage
We can leverage Eqs.~\eq{a}, \eq{Gamma}, and \eq{f} to reformulate Eq.~\eq{expec}. Instead of summing over $\{\boldsymbol{z}^{[1]},\boldsymbol{z}^{[2]},\boldsymbol{z}^{[0]},\\\boldsymbol{z}^{[-2]},\boldsymbol{z}^{[-1]}\}$, we can switch the summation to the configuration basis. Consider the node $k$ within the basis vectors $\{\boldsymbol{z}^{[1]},\boldsymbol{z}^{[2]},\boldsymbol{z}^{[0]},\boldsymbol{z}^{[-2]},\boldsymbol{z}^{[-1]}\}$. we can represent $\{\boldsymbol{z}^{[1]}_k,\boldsymbol{z}^{[2]}_k,\boldsymbol{z}^{[0]}_k,\boldsymbol{z}^{[-2]}_k,\boldsymbol{z}^{[-1]}_k\}$ using the vector $\boldsymbol{z}_k$. And the configuration basis is then given by $\{\boldsymbol{z}_k,\ 1 \leq k \leq n\}$, where each $\boldsymbol{z}_k$ runs through all $2^{2p+1}$ elements of $\{+1,-1\}^{2p+1}$.
\begin{equation}\label{eq:expec-config-basis}
    \bra{\boldsymbol{\gamma},\boldsymbol{\beta}} Z_LZ_R \ket{\boldsymbol{\gamma},\boldsymbol{\beta}}
    =\sum_{\{ \boldsymbol{z}_k \}} z_L^{[0]} z_R^{[0]}
    \exp\left[
    -i \sum_{ (k_1,k_2)\in E}
    \boldsymbol{\Gamma}_{c_{(k_1,k_2)}} \cdot (\boldsymbol{z}_{k_1}\boldsymbol{z}_{k_2})\right] \prod_{k=1}^n f(\boldsymbol{z}_k),
\end{equation}
where the subscript $c_{(k_1,k_2)}$ of $\boldsymbol{\Gamma}$ represent the category of edge $(k_1,k_2)$ and $\boldsymbol{z}_{k_1}\boldsymbol{z}_{k_2}$ represents element-wise product.

To calculate Eq.~\eq{expec-config-basis}, we need to sum over the configuration basis over each node in the subgraph in \fig{subgraph}. We can firstly sum over the leaf node $u_4$ attached to the node $w_2$. 

The summation over the configuration basis $\boldsymbol{z}_{u_4}$ yields
\begin{equation}\label{eq:sum-leaf1}
    \sum_{\boldsymbol{z}_{u_4}} f(\boldsymbol{z}_{u_4}) \exp(-i\Gamma_4 \cdot \boldsymbol{z}_{u_4} \boldsymbol{z}_{w_2}).
\end{equation}

Eq.~\eq{sum-leaf1} can also be interpreted as the contributions from the succession nodes $s3w4k,\ k \in \underline{A}_4$ to node $s3w$. We can sum the contribution of $s3w4k,\ k \in \underline{A}_4$ to $s3w$ using an iterative formula $G_{4,w}^{[m]}(z_w)$. The superscript $m$ denote how far the node $s3w$ can see. As is known the ma-QAOA is a local algorithm and the summation is done on a local subgraph. So the superscirpt $m$ means the node $s3w$ can get the contribution of $s3w4k,\ k \in \underline{A}_4$ only if the distance between node $w$ and node $k$ in graph $\underline{A}_4$ is no more than $m$. For the node $w_2$, the superscript $m=1$ and we can reformulate Eq.~\eq{sum-leaf1} as
\begin{equation}\label{eq:sum-leaf1-G}
   G_{4,w}^{[1]}(\boldsymbol{z}_{w_2}) = 
   \sum_{ \substack{ \{\boldsymbol{z}_k\} \\ k \in \underline{A}_4,\ \text{dis}(k,w) \leq 1 } }
   \exp\left[ -i\Gamma_4 \cdot \left( \sum_{\substack{(k_1,k_2) \in E(\underline{A}_4)\\
   \text{dis}((k_1,k_2),w) \leq 1} } z_{k_1}z_{k_2} \right)\right] 
   \prod_{k \neq w} f(z_k).
\end{equation}

Summing over the configuration basis of nodes $v_3$ and $u_3$ to $w_2$ in \fig{subgraph} follows the same pattern. We need to sum the configuration basis of the successions generated by graph $\underline{A}_1$. And the intuition of the iterative formula $G_{1,w}^{[1]}(z_w)$ is the same. 
\begin{equation}\label{eq:sum-leaf2-G}
    G_{1,w}^{[1]}(\boldsymbol{z}_{w_2}) = 
   \sum_{ \substack{ \{\boldsymbol{z}_k\} \\ k \in \underline{A}_1,\ \text{dis}(k,w) \leq 1 } }
   \exp\left[ -i\Gamma_1 \cdot \left( \sum_{\substack{(k_1,k_2) \in E(\underline{A}_1)\\
   \text{dis}((k_1,k_2),w) \leq 1} } z_{k_1}z_{k_2} \right)\right] 
   \prod_{k \neq w} f(z_k).
\end{equation}

After performing the sums for each succession $sCwC^\prime v$ of $sCw$, we can simply multiply the iterative formula $G_{C^\prime,w}^{[m]}(\boldsymbol{z}_w)$ together to obtain the total contribution form all successor nodes $sCwC^\prime v$, where $C^\prime \neq C$ and $v \in \underline{A}_{C^\prime}$
\begin{equation}
    \prod_{C^\prime \neq C} G_{C^\prime,w}^{[m]}(\boldsymbol{z}_w).
\end{equation}

Again, summing nodes $w_2$ and $u_2$ to the parent node $v_1$ in \fig{subgraph} yields 
\begin{equation}
    \sum_{\boldsymbol{z}_{w_2},\boldsymbol{z}_{u_2}} f(\boldsymbol{z}_{w_2}) f(\boldsymbol{z}_{u_2})  G_{1,w}^{[1]}(\boldsymbol{z}_{w_2})  G_{1,u}^{[1]}(\boldsymbol{z}_{u_2}) \exp(-i\left(
    \Gamma_3 \cdot \boldsymbol{z}_{w_2} \boldsymbol{z}_{v_1}+
    \Gamma_2 \cdot \boldsymbol{z}_{u_2} \boldsymbol{z}_{v_1} \right)), 
\end{equation}
which can also be written as an iteration
\begin{equation}
\begin{aligned}
    \prod_{C^\prime \neq C} G_{C^\prime,v}^{[2]}(\boldsymbol{z}_{v_1})
    &=G_{3,v}^{[2]}(\boldsymbol{z}_{v_1})G_{2,v}^{[2]}(\boldsymbol{z}_{v_1})\\
    &=\sum_{\boldsymbol{z}_{w_2},\boldsymbol{z}_{u_2}} f(\boldsymbol{z}_{w_2}) f(\boldsymbol{z}_{u_2})  G_{1,w}^{[1]}(\boldsymbol{z}_{w_2})  G_{1,u}^{[1]}(\boldsymbol{z}_{u_2}) \exp(-i\left(
    \Gamma_3 \cdot \boldsymbol{z}_{w_2} \boldsymbol{z}_{v_1}+
    \Gamma_2 \cdot \boldsymbol{z}_{u_2} \boldsymbol{z}_{v_1} \right))\\
    &=\sum_{ \substack{ \{\boldsymbol{z}_k\} \\ k \in \underline{A}_3,\ \text{dis}(k,v) \leq 1 } }
    \prod_{k \neq w} \left[ f(z_k) \prod_{C \neq 3} G_{C,k}^{[2-\text{dis}(k,v)]}(\boldsymbol{z}_k) \right]
   \exp\left[ -i\Gamma_3 \cdot \left( \sum_{\substack{(k_1,k_2) \in E(\underline{A}_3)\\
   \text{dis}((k_1,k_2),v) \leq 1} } z_{k_1}z_{k_2} \right)\right] 
    \\
   &\times \sum_{ \substack{ \{\boldsymbol{z}_k\} \\ k \in \underline{A}_2,\ \text{dis}(k,v) \leq 1 } }
    \prod_{k \neq w} \left[ f(z_k) \prod_{C \neq 2} G_{C,k}^{[2-\text{dis}(k,v)]}(\boldsymbol{z}_k) \right]
   \exp\left[ -i\Gamma_2 \cdot \left( \sum_{\substack{(k_1,k_2) \in E(\underline{A}_2)\\
   \text{dis}((k_1,k_2),v) \leq 1} } z_{k_1}z_{k_2} \right)\right] .
\end{aligned}
\end{equation}

The iterative formula for the remaining vertices can be derived in a similar fashion. Finally, we sum the configuration basis corresponding to the three vertices in the triangle containing the edge $(L,R)$. This yields the $p=2$ ma-QAOA expectation of edge $(L,R)$ to 
\begin{align}
    \bra{\boldsymbol{\gamma},\boldsymbol{\beta}} Z_LZ_R \ket{\boldsymbol{\gamma},\boldsymbol{\beta}}&= 
        \sum_{\boldsymbol{z}_{v_1},\boldsymbol{z}_{u_1},\boldsymbol{z}_{w_1}}
        \boldsymbol{z}_{v_1}^{[0]}\boldsymbol{z}_{w_1}^{[0]}
        f(\boldsymbol{z}_{v_1})f(\boldsymbol{z}_{w_1})f(\boldsymbol{z}_{u_1})\nonumber \\
        &\quad\quad \times G_{2,v}^{[2]}(\boldsymbol{z}_{v_1})G_{3,v}^{[2]}(\boldsymbol{z}_{v_1}) G_{3,w}^{[2]}(\boldsymbol{z}_{w_1})G_{4,w}^{[2]}(\boldsymbol{z}_{w_1})
        G_{2,u}^{[1]}(\boldsymbol{z}_{u_1})G_{4,u}^{[1]}(\boldsymbol{z}_{u_1})\nonumber \\
        &\quad\quad \times \exp(-i\Gamma_1 \cdot \left(\boldsymbol{z}_{w_1} \boldsymbol{z}_{v_1}+\boldsymbol{z}_{u_1} \boldsymbol{z}_{v_1} + \boldsymbol{z}_{u_1} \boldsymbol{z}_{w_1} \right))\nonumber \\
        & = \sum_{ \substack{ \{\boldsymbol{z}_k\} \\ k \in \underline{A}_1,\ \min(\text{dis}(k,L),\text{dis}(k,R) )\leq 2 } }
        z_L^{[0]}z_R^{[0]}
        \prod_{k} \left[ f(z_k) \prod_{C \neq 1} G_{C,k}^{[2-\min(\text{dis}(k,L),\text{dis}(k,R)]}(\boldsymbol{z}_k) \right]\nonumber \\
        &\quad\quad \times \exp\left[ -i\Gamma_1 \cdot \left( \sum_{\substack{(k_1,k_2) \in E(\underline{A}_1)\\
   \min(\text{dis}((k_1,k_2),L),\text{dis}((k_1,k_2),R)) \leq 2}} z_{k_1}z_{k_2} \right)\right].
\end{align}

For general graphs and higher $p$, we simply need to calculate the expectation for each type of subgraph. As proposed in \prop{subgraph} and \eq{cut-fraction}, the expectation of subgraphs in an additive product graph depends only on the choice of $C$ and $(u, v) \in \underline{A}_C$. For higher $p$, $C \in [c]$ and $(a, b) \in \underline{A}_C$, every $\mathbb{E}[\underline{A}_C(a, b)]$ can be calculated iteratively, leveraging the same approach. Specifically, the $m$-depth iteration on vertex $a$ in the atom graph $C$ can be calculated using the iterative formula $G_{C,a}^{[m]}$. By incorporating contributions from the child nodes via this formula, the expectation can be obtained. In the following, we present a formal theorem to illustrate the iterative formula for ma-QAOA on additive product graphs.
\begin{theorem}[expected cut fraction on additive product graph]
    Let $G=A_1\oplus \cdots \oplus A_c$ be an additive product graph. For subgraph corresponding to each edge $(a, b) \in \underline{A}_C,\ C \in [c]$, the expectation $\mathbb{E}[\underline{A}_C(a, b)]$ is given by:
    \begin{equation}
    \begin{aligned}\label{eq:EA}
        \mathbb{E}[\underline{A}_C(a,b)]&=-\sum_{ \substack{ \{\boldsymbol{z}_k\} \\ k \in \underline{A}_C,\ \min(\text{dis}(k,a),\text{dis}(k,b) )\leq p } }
        z_a^{[0]}z_b^{[0]}
        \prod_{k} \left[ f(z_k) \prod_{C^\prime \neq C} G_{C^\prime,k}^{[p-\min(\text{dis}(k,a),\text{dis}(k,b)]}(\boldsymbol{z}_k) \right] \\
        &\quad\quad \times \exp\left[ -i\Gamma_C \cdot \left( \sum_{\substack{(k_1,k_2) \in E(\underline{A}_C)\\
        \min(\text{dis}((k_1,k_2),L),\text{dis}((k_1,k_2),R)) \leq p} } z_{k_1}z_{k_2} \right)\right] ,
    \end{aligned}
    \end{equation}
    where $G_{C,a}^{[0]}(\boldsymbol{z}_a)=1$, $G_{C,a}^{[m]}(\boldsymbol{z}_a)$ is defined for each $C \in [c]$, $a \in E(\underline{A}_C)$, and $m = 1, 2, \ldots, p$ as follows:
    \begin{equation}\label{eq:G}
        G_{C,a}^{[m]}(\boldsymbol{z}_a)=\sum_{ \substack{ \{\boldsymbol{z}_k\} \\ k \in \underline{A}_C,\ \text{dis}(k,a) \leq m } }
        \prod_{k \neq a} \left[ f(z_k) \prod_{C^\prime \neq C} G_{C^\prime,k}^{[m-\text{dis}(k,a)]}(\boldsymbol{z}_k) \right]
        \exp\left[ -i\Gamma_C \cdot \left( \sum_{\substack{(k_1,k_2) \in E(\underline{A}_C) \\
        \text{dis}((k_1,k_2),a) \leq m}} z_{k_1}z_{k_2} \right)\right].
    \end{equation}
    
    Finally, the expected cut fraction of ma-QAOA on the additive product graph $G$ is obtained by summing the expectations $\mathbb{E}[\underline{A}_C(a, b)]$ for each subgraph.
    \begin{equation}
        \frac{\bra{\boldsymbol{\gamma},\boldsymbol{\beta}} C_{MC} \ket{\boldsymbol{\gamma},\boldsymbol{\beta}}}{|E|}
        = \frac12 + \frac{1}{2(|E({\underline{A}_1})|+\cdots+|E({\underline{A}_c})|)} \sum_{\substack{C \in [c], \\ (a,b) \in \underline{A}_C}} 
        \mathbb{E}[\underline{A}_C(a,b)].
    \end{equation}
\end{theorem}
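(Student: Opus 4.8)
The plan is to convert the computational-basis expansion Eq.~\eq{expec-config-basis} into the claimed recursion by exploiting the ``glued-atom'' structure of an additive product graph, so that the finite sum over configuration vectors factorizes one atom at a time. First fix $C\in[c]$ and an edge $(a,b)\in\underline{A}_C$. By \prop{subgraph} and the locality of ma-QAOA, $\bra{\boldsymbol{\gamma},\boldsymbol{\beta}} Z_aZ_b \ket{\boldsymbol{\gamma},\boldsymbol{\beta}}$ equals the same expectation evaluated on the $p$-neighborhood subgraph $g$ of $(a,b)$, a finite graph; on $g$ I would run the manipulation of \cite{basso2022quantum} essentially verbatim---expand $\ket{\boldsymbol{\gamma},\boldsymbol{\beta}}$ in the computational basis, insert $2p$ resolutions of the identity, and regroup node by node using the configuration vectors of Eq.~\eq{a} and the objects $\boldsymbol{\Gamma}_C$ and $f(\cdot)$ of Eqs.~\eq{Gamma} and~\eq{f}---to reach Eq.~\eq{expec-config-basis}, the only ma-QAOA modification being that the phase on an edge $(k_1,k_2)$ uses the vector $\boldsymbol{\Gamma}$ indexed by the atom category $c_{(k_1,k_2)}$. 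The new ingredient is the way this sum is then resolved.

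Next I would prove the structural lemma that drives the factorization: the $p$-neighborhood of $(a,b)\in\underline{A}_C$ is built by placing one copy of $\underline{A}_C$ restricted to the vertices at distance $\le p$ from $\{a,b\}$ and then, recursively, attaching to each vertex $k$ of each atom-copy $\underline{A}_{C'}$ already placed, for every atom index $C''\ne C'$ with $k\in V(\underline{A}_{C''})$, a fresh copy of $\underline{A}_{C''}$ rooted at $k$ and restricted to the vertices within the remaining distance budget of $k$; moreover distinct copies are edge-disjoint and share at most their common root vertex, and the graph distance inside any single copy coincides with the intrinsic distance of that atom. This is an unwinding of \defn{additive-graph}: condition~2 on the vertex set forbids repeating an atom at consecutive string positions, conditions~1--3 on the edge set say each extension attaches a single atom at a single vertex, and connectedness of each $\underline{A}_{C'}$ gives the distance statement (a shortest path between two vertices of a copy never gains by leaving it, since copies meet only at cut vertices). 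I would also record the degenerate conventions $G^{[0]}_{C',k}\equiv 1$ and $G^{[m]}_{C',k}\equiv 1$ when $k\notin V(\underline{A}_{C'})$.

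With the structure in hand I would define, for each $C'\in[c]$, each $a'\in V(\underline{A}_{C'})$, and each $m\in\{0,\dots,p\}$, the quantity $G^{[m]}_{C',a'}(\boldsymbol{z}_{a'})$ to be the partial sum of Eq.~\eq{expec-config-basis} restricted to the sub-forest of atom-copies hanging off $a'$ through atom $C'$ within budget $m$, with $\boldsymbol{z}_{a'}$ fixed and all other configuration vectors in that sub-forest summed, weighted by $\prod_{k\ne a'}f(\boldsymbol{z}_k)$ times the edge phases of all copies in the sub-forest. I would then prove Eq.~\eq{G} by induction on $m$: the base case $m=0$ is an empty sum equal to $1$, and for the inductive step I peel off the topmost copy of $\underline{A}_{C'}$ (its vertices are the $k$ with $\mathrm{dis}(k,a')\le m$), whereupon the structural lemma identifies the rest of the sub-forest as the disjoint union, over those $k$ and over $C''\ne C'$, of the sub-forests rooted at $k$ through $C''$ with budget $m-\mathrm{dis}(k,a')$; summing inside those first turns the total into $\sum_{\{\boldsymbol{z}_k\}}\prod_{k\ne a'}\bigl[f(\boldsymbol{z}_k)\prod_{C''\ne C'}G^{[m-\mathrm{dis}(k,a')]}_{C'',k}(\boldsymbol{z}_k)\bigr]$ times the phase of the edges of $\underline{A}_{C'}$ inside the peeled copy, which is exactly Eq.~\eq{G}. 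Applying the same peeling to the central copy of $\underline{A}_C$---now keeping $a$ and $b$ in the summation, carrying the operator factor $z_a^{[0]}z_b^{[0]}$, and measuring distances to the edge $\{a,b\}$ (hence the $\min$ of the two endpoint distances and budget $p-\min(\cdot,\cdot)$)---collapses Eq.~\eq{expec-config-basis} to the right-hand side of Eq.~\eq{EA} up to a global sign, which is the relation $\mathbb{E}[\underline{A}_C(a,b)]=-\bra{\boldsymbol{\gamma},\boldsymbol{\beta}} Z_aZ_b \ket{\boldsymbol{\gamma},\boldsymbol{\beta}}$ forced by comparing Eq.~\eq{cut-fraction} with Eq.~\eq{expectation}. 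The cut-fraction identity is then immediate from Eq.~\eq{cut-fraction}, i.e.\ from \prop{subgraph} together with Eq.~\eq{expectation}.

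I expect the structural lemma to be the main obstacle. Unlike the high-girth analysis of \cite{basso2022quantum}, where the $p$-neighborhood is literally a pair of trees, here the atoms $\underline{A}_{C'}$ may themselves contain short cycles, so one must argue that recursion happens only \emph{between} atoms while each atom-copy is summed over wholesale; making precise that an atom-copy is attached to the rest of the neighborhood along a single cut vertex, that intrinsic distances are preserved, and that the two flavours of distance budget (distance to a vertex inside a copy versus the $\min$ of the distances to the two endpoints of the root edge) are accounted for consistently---together with the degenerate cases where a vertex lies outside an atom---is where the real work lies. Once the lemma is in place, both the induction for Eq.~\eq{G} and the final assembly are routine regroupings of a finite sum, as already illustrated by the $p=2$ computation preceding the theorem.
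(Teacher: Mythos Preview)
Your proposal is correct and follows essentially the same approach as the paper: the paper establishes the theorem by working through the $p=2$ example in detail (inserting computational-basis resolutions of the identity, passing to the configuration basis of Eq.~\eq{expec-config-basis}, and then summing from the leaves inward to build the $G$ functions one atom-copy at a time) and then stating the general formulas Eqs.~\eq{G} and~\eq{EA} as the evident generalization. Your structural lemma and induction on $m$ make explicit what the paper leaves implicit in the phrase ``for general graphs and higher $p$, we simply need to calculate the expectation for each type of subgraph,'' but the underlying decomposition---atoms glued at cut vertices, recursion only between atom-copies, each copy summed wholesale---is the same.
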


The iterative formula can be further simplified. 
The upper graph in \fig{additive-graph} exhibits symmetry between its atoms, allowing the edges to be categorized into two groups instead of four. Specifically, one group comprises the edges within the triangles, while the other group consists of the cut edges connecting the triangles. The application of this symmetry property reduces the number of parameters but can still describe the edge structure of the graph. Consequently, the iterative formula for this graph is simplified, yielding $G_{C,v}^{[m]}=G_{C,u}^{[m]}=G_{C,w}^{[m]}$ for all $C\in [c]$ and $m \in [0,p]$, provided that the iterative formula exists.

\subsection{Generalizing classical MaxCut to quantum MaxCut}

The quantum MaxCut problem can be seen as a natural quantum extension of the classical MaxCut problem, and it is defined as follows: 
\begin{definition}
    Given a graph $G = (V,E)$, the goal of the quantum MaxCut problem is to determine the largest eigenvalue of the following Hamiltonian:
    \begin{equation}
        H = \sum_{(u,v) \in E} \frac12 (1-X_uX_v-Y_uY_v-Z_uZ_v).
    \end{equation}
\end{definition}
As a comparison, the cost function operator for the classical MaxCut problem is given by
$C_{MC} = \sum_{(u,v) \in E} \frac{1}{2} \big(-Z_u Z_v + 1\big), $
as defined in \eq{maxcut-operator}. The quantum MaxCut problem generalizes the classical version by incorporating Pauli $X$ and $Y$ operators on each edge, thereby capturing more quantum correlations beyond the classical MaxCut problem.

Ref.~\cite{kannan2024quantum} investigated iterative formulas for the quantum MaxCut problem on high-girth regular graphs within the framework of the Hamiltonian QAOA. In their work, the QAOA state is prepared using four driver Hamiltonians: $A= \sum_{(u,v) \in E} Z_u Z_v,\ B=\sum_{v \in V}X_v,\ C=\sum_{v \in V}Z_v,\ D=\sum_{v \in V}\boldsymbol{n}_v \cdot (X_v,Y_v,Z_v)$, along with four parameters $\boldsymbol{\Theta} = (\boldsymbol{\alpha},\boldsymbol{\beta},\boldsymbol{\gamma},\boldsymbol{\delta}) \in \mathbb{R}^{4p}$. Here $\boldsymbol{n} = \{\boldsymbol{n}_v\}_{v \in V}$ and $ \boldsymbol{m} = \{\boldsymbol{m}_v\}_{v \in V}$, where each $\boldsymbol{n}_v$ and $\boldsymbol{m}_v$ lie on the 3-dimensional unit sphere; $\boldsymbol{n}_v$ gives the direction of the driving Hamiltonian $D$ and $\ket{\boldsymbol{m}_v}$ is the starting state. 
The QAOA state is then defined as:
\begin{equation}
   \left|\boldsymbol{\Theta}, G, \boldsymbol{n}, \boldsymbol{m}\right\>
   = 
   e^{-i\delta_p D} e^{-i\gamma_p C} e^{-i\beta_p B} e^{-i\alpha_p A} \cdots e^{-i\delta_1 D} e^{-i\gamma_1 C} e^{-i\beta_1 B} e^{-i\alpha_1 A} \left( \otimes_{v\in V} \ket{\boldsymbol{m}_v} \right),
\end{equation}
and the loss function is given by the expectation value
\begin{equation}
    \mathbb{E}_{\boldsymbol{m},\boldsymbol{n}}\left[ 
    \left\<\boldsymbol{\Theta}, G, \boldsymbol{n}, \boldsymbol{m}\right|
    \frac12 (1-X_uX_v-Y_uY_v-Z_uZ_v)
    \left|\boldsymbol{\Theta}, G, \boldsymbol{n}, \boldsymbol{m}\right\>
    \right]
\end{equation}
which can be computed by an iterative formula in~\cite{kannan2024quantum}, which can be understood as the Hamiltonian for a single edge under the prepared QAOA state on high-girth regular graphs.

Our analysis of QAOA's iterative formula for MaxCut on additive product graphs naturally extends to the quantum MaxCut problem. Specifically, the driver Hamiltonians $B$,$C$, and $D$ act on individual vertices, while the Hamiltonian $A$ operates on pairs of vertices connected by an edge. This allows us to reuse the iteration technique: summing over configurations at the leaf nodes and propagating contributions to parent nodes, with only minor modifications to the iteration terms. Similarly,
we define the vector associated with node $v$ as $\boldsymbol{z}_v$ of length $2p+2$ as 
\begin{equation*}
 \boldsymbol{z}_v=(z_v^{[1]},z_v^{[2]},\ldots,z_v^{[p+1]},z_v^{[-(p+1)]},\ldots,z_v^{[-2]},z_v^{[1-]}),\ z_v^{[i]} \in \{+1,-1\}
\end{equation*}
and 
\begin{equation*}
    \boldsymbol{\mathcal{A}} =(\alpha_1,\alpha_2,\ldots,\alpha_p,0,0,-\alpha_{p},\ldots,-\alpha_{2},-\alpha_{1}).
\end{equation*}
We can then state the following corollary, which calculates the iterative formula for quantum MaxCut on additive product graphs.
\begin{corollary}
\label{cor:quantum-maxcut}
Let $G=A_1\oplus \cdots \oplus A_c$ be an additive product graph. For subgraph corresponding to each edge $(a, b) \in \underline{A}_C,\ C \in [c]$, the expectation of quantum MaxCut on this edge, where the two vertices on the edge are denoted as $L$ and $R$, is given by:
\begin{equation}
    \mathbb{E}_{QMC}[\underline{A}_C(a,b)] = \mathbb{E}_{\boldsymbol{m},\boldsymbol{n}}\left[ 
    \left\<\boldsymbol{\Theta}, G, \boldsymbol{n}, \boldsymbol{m}\right|
    \frac12 (1-X_uX_v-Y_uY_v-Z_uZ_v)
    \left|\boldsymbol{\Theta}, G, \boldsymbol{n}, \boldsymbol{m}\right\>
    \right].
\end{equation}

Let function $f^{\sigma}(\boldsymbol{z}_v)$ denote the summation with respect to each vertex,
\begin{equation}
\label{eq:f_new}
    f^{\sigma}(\boldsymbol{z}_v)  = \mathbb{E}_{\boldsymbol{m}_v,\boldsymbol{n}_v}\left[ 
\begin{aligned}
    \braket{\boldsymbol{m}_v | z_v^{[1]}} 
    \bra{z_v^{[1]}} e^{i\beta_1 B}e^{i\gamma_1 C}e^{i\delta_1 D}   \ket{z_v^{[2]}} 
    \ldots
   \bra{z_v^{[p]}} e^{i\beta_p B}e^{i\gamma_p C}e^{i\delta_p D}   \ket{z_v^{[p+1]}}
   \bra{z_v^{[p+1]}} \sigma \ket{z_v^{[-(p+1)]}} \\
    \bra{z_v^{[-(p+1)]}} e^{-i\delta_p D}e^{-i\gamma_p C} e^{-i\beta_p B}  \ket{z_v^{[-p]}}
    \ldots
    \bra{z_v^{[-2]}} e^{-i\delta_1 D}e^{-i\gamma_1 C} e^{-i\beta_1 B}  \ket{z_v^{[-1]}} \braket{z_v^{[-1]} | \boldsymbol{m}_v } 
\end{aligned}
    \right].
\end{equation}
The iterative formula for $G_{C,a}^{[m]}(\boldsymbol{z}_a)$, which represents the summation over the configuration basis of the successive nodes of node $a$, is given by:
\begin{equation}
\label{eq:G_new}
    G_{C,a}^{[m]}(\boldsymbol{z}_a)=\sum_{ \substack{ \{\boldsymbol{z}_k\} \\ k \in \underline{A}_C,\ \text{dis}(k,a) \leq m } }
    \prod_{k \neq a} \left[ f^{I}(z_k) \prod_{C^\prime \neq C} G_{C^\prime,k}^{[m-\text{dis}(k,a)]}(\boldsymbol{z}_k) \right]
    \exp\left[ -i \mathcal{A} \cdot \left( \sum_{\substack{(k_1,k_2) \in E(\underline{A}_C) \\
    \text{dis}((k_1,k_2),a) \leq m}} z_{k_1}z_{k_2} \right)\right].
\end{equation}

Finally, the expectation for $\sigma  = \frac12 (1-X_LX_R-Y_LY_R-Z_LZ_R) $ is
\begin{equation}
\label{eq:EA_new}
\begin{aligned}
    &\mathbb{E}_{\boldsymbol{m},\boldsymbol{n}}\left[ \bra{ \boldsymbol{\Theta}, G, \boldsymbol{n}, \boldsymbol{m}} \sigma \ket{ \boldsymbol{\Theta}, G, \boldsymbol{n}, \boldsymbol{m}}\right] \\
    = &-\sum_{ \substack{ \{\boldsymbol{z}_k\} \\ k \in \underline{A}_C,\ \min(\text{dis}(k,a),\text{dis}(k,b) )\leq p } }
    \prod_{k } \left[ f^{\tilde{\sigma}}(z_k) \prod_{C^\prime \neq C} G_{C^\prime,k}^{[p-\min(\text{dis}(k,a),\text{dis}(k,b)]}(\boldsymbol{z}_k) \right] \\
    &\quad\quad \times \exp\left[ -i \mathcal{A} \cdot \left( \sum_{\substack{(k_1,k_2) \in E(\underline{A}_C)\\
    \min(\text{dis}((k_1,k_2),L),\text{dis}((k_1,k_2),R)) \leq p} } z_{k_1}z_{k_2} \right)\right] ,
\end{aligned}
\end{equation}
where $\tilde{\sigma} = \sigma $ if $k \in \{L,R\}$, and $\tilde{\sigma} = I $ otherwise. 
\end{corollary}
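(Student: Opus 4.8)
The plan is to repeat the configuration-basis expansion used in the classical case, exploiting that among the four driver Hamiltonians only $A=\sum_{(u,v)\in E}Z_uZ_v$ couples distinct vertices and that $A$ is diagonal, whereas $B$, $C$, $D$ and the single-qubit components of the observable all act on individual vertices. Inserting complete sets of computational-basis states then makes the amplitude a product over vertices of on-site matrix elements times a product over edges of scalar phases coming from the diagonal $A$-evolutions; this is exactly the structure behind \eq{EA} and \eq{G}, so the recursion carries over after the substitutions $f\mapsto f^{I}$ at interior vertices, $f\mapsto f^{\tilde\sigma}$ at $L$ and $R$, and $\Gamma_C\mapsto\boldsymbol{\mathcal{A}}$ (here $\boldsymbol{\mathcal{A}}$ carries no category subscript because, unlike ma-QAOA, the quantum-MaxCut ansatz of \cite{kannan2024quantum} uses a single parameter $\alpha_j$ per layer for $A$).

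In more detail I would proceed as follows. First, by linearity write $\sigma=\tfrac12(1-X_LX_R-Y_LY_R-Z_LZ_R)$ as a combination of the four operators $P_L\otimes P_R$ with $P\in\{I,X,Y,Z\}$, treat each separately, and recombine at the end; for the $P$-term one uses $\tilde\sigma=P$ at both $L$ and $R$. Second, insert $2p+2$ resolutions of the identity in the $Z$-basis --- $p+1$ in the bra chain and $p+1$ in the ket chain, with the two central ones flanking the observable --- and record the value of vertex $v$ across these bases as $\boldsymbol{z}_v$ of length $2p+2$; the two zeros in $\boldsymbol{\mathcal{A}}=(\alpha_1,\dots,\alpha_p,0,0,-\alpha_p,\dots,-\alpha_1)$ reflect that no $A$-evolution sits at the measurement step. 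Since $A$ is diagonal, $e^{\mp i\alpha_j A}$ multiplies a basis state only by $\exp\big(\mp i\alpha_j\sum_{(u,v)\in E}z^{[\pm j]}_uz^{[\pm j]}_v\big)$, and collecting these phases over all layers produces exactly the edge factor $\exp(-i\,\boldsymbol{\mathcal{A}}\cdot(\boldsymbol{z}_{k_1}\boldsymbol{z}_{k_2}))$. Third, what remains for each vertex $v$ is a chain of on-site matrix elements of the $e^{\pm i\beta_jB}e^{\pm i\gamma_jC}e^{\pm i\delta_jD}$ factors, in the layer ordering inherited from the definition of $\ket{\boldsymbol{\Theta},G,\boldsymbol{n},\boldsymbol{m}}$, sandwiching the on-site observable element $\bra{z_v^{[p+1]}}P_v\ket{z_v^{[-(p+1)]}}$ with $P_v=I$ unless $v\in\{L,R\}$, together with the boundary overlaps $\braket{\boldsymbol{m}_v|z_v^{[1]}}$ and $\braket{z_v^{[-1]}|\boldsymbol{m}_v}$; because the pairs $(\boldsymbol{m}_v,\boldsymbol{n}_v)$ are independent over $v$, the expectation $\mathbb{E}_{\boldsymbol{m},\boldsymbol{n}}$ passes inside the product and yields precisely $f^{\sigma}(\boldsymbol{z}_v)$ of \eq{f_new}. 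Fourth, invoke \prop{subgraph}: the radius-$p$ neighborhood of an edge $(a,b)\in\underline{A}_C$ is the tree of atoms built from copies of $\underline{A}_1,\dots,\underline{A}_c$ as in \defn{additive-graph}, so the sum over $\{\boldsymbol{z}_k\}$ on this subgraph can be carried out by summing the leaf atoms first and propagating inward, which is exactly the recursion $G_{C,a}^{[m]}$ of \eq{G_new} --- atom by atom this is identical to the classical derivation of \eq{EA} and \eq{G}. Finally, recombining the four terms gives \eq{EA_new}, and summing $\mathbb{E}_{QMC}[\underline{A}_C(a,b)]$ over all edge types produces the quantum-MaxCut cut fraction in the same way \eq{cut-fraction} does classically.

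I expect the real work to be bookkeeping rather than conceptual. One must track the length-$(2p+2)$ index structure and the placement of the two zeros of $\boldsymbol{\mathcal{A}}$ carefully, and one must check that the non-diagonal observable components $X_LX_R$ and $Y_LY_R$ do not enlarge the light cone --- they do not, since they still act only on $L$ and $R$, so after $p$ alternations correlations reach only distance $p$ from $\{L,R\}$ and \prop{subgraph} applies verbatim. The remaining points --- that $\mathbb{E}_{\boldsymbol{m},\boldsymbol{n}}$ commutes with the deterministic sum over $\{\boldsymbol{z}_k\}$ and with the edge-phase products, and that the diagonal $C$-evolution is absorbed harmlessly into $f^{\sigma}$ just like the other on-site factors --- are immediate, so everything else reduces to the already-established classical recursion.
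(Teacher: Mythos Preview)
Your proposal is correct and follows essentially the same approach as the paper's own proof sketch: both argue that the configuration-basis expansion and the leaf-to-root recursion from the classical case carry over verbatim once one (i) lengthens $\boldsymbol{z}_v$ to $2p+2$ to accommodate the non-diagonal observable, (ii) replaces $f$ by $f^{\sigma}$ (with $\sigma=I$ away from $L,R$), and (iii) replaces $\Gamma_C$ by $\boldsymbol{\mathcal{A}}$. Your sketch is in fact more detailed than the paper's --- in particular your explicit Pauli decomposition of $\sigma$, the light-cone check for the non-diagonal terms, and the use of independence of the $(\boldsymbol{m}_v,\boldsymbol{n}_v)$ to push $\mathbb{E}_{\boldsymbol{m},\boldsymbol{n}}$ inside the vertex product are all points the paper leaves implicit.
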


\begin{proof}[Proof sketch]
For both classical MaxCut and quantum MaxCut on additive product graphs, we adopt a similar methodology by analyzing iterative formulas on these graphs. This approach involves propagating contributions from child nodes to parent nodes.
We begin by defining a vector $\boldsymbol{z}_v$, associated with each node $v$. The length of $\boldsymbol{z}_v$ is $2p+1$ in the analysis of MaxCut.  In the quantum MaxCut problem, the length of $\boldsymbol{z}_v$ is $2p+2$ since an additional computational basis is required to account for the fact that $\frac12 (1-X_LX_R-Y_LY_R-Z_LZ_R)$ is not diagonal, in contrast to $\frac12 (1-Z_LZ_R)$. Similarly, we define the parameter vector $\mathcal{A}$. 
The function in \eq{f_new} is analogous to that in \eq{f} applied to the classical MaxCut setting, as it computes the product associated with each vertex $v$. Likewise, the iterative formula \eq{G_new} corresponds to \eq{G}, with Eq.~\eq{f_new} replaces \eq{f} and the parameter $\mathcal{A}$ substitutes $\Gamma_C$. Both formulas utilize the same method to analyze the nodes in the additive product graph. Finally, we can obtain the expectation of the iterative formula \eq{EA_new} with \eq{f_new}, \eq{G_new}, and parameter $\mathcal{A}$, which can be considered as a generalization of \eq{EA}.
\end{proof}

\section{Experiments}
\label{sec:experiments}

\subsection{Comparison with classical local algorithms}
In this section, we will perform numerical evaluations on several cases of additive product graphs in \fig{graph-set1}. First, we compare the results of different classical local algorithms for MaxCut known to the authors.  Specifically, we will test the threshold algorithm as well as the variations of Barak and Marwaha's $k$-local algorithm to provide convincing benchmarks on the graphs tested. 

\label{sec:comp-classical}
\begin{figure}[H]
\centering
    \subcaptionbox{\label{fig:a}}{\includegraphics[width = 0.3\textwidth, page={1}]{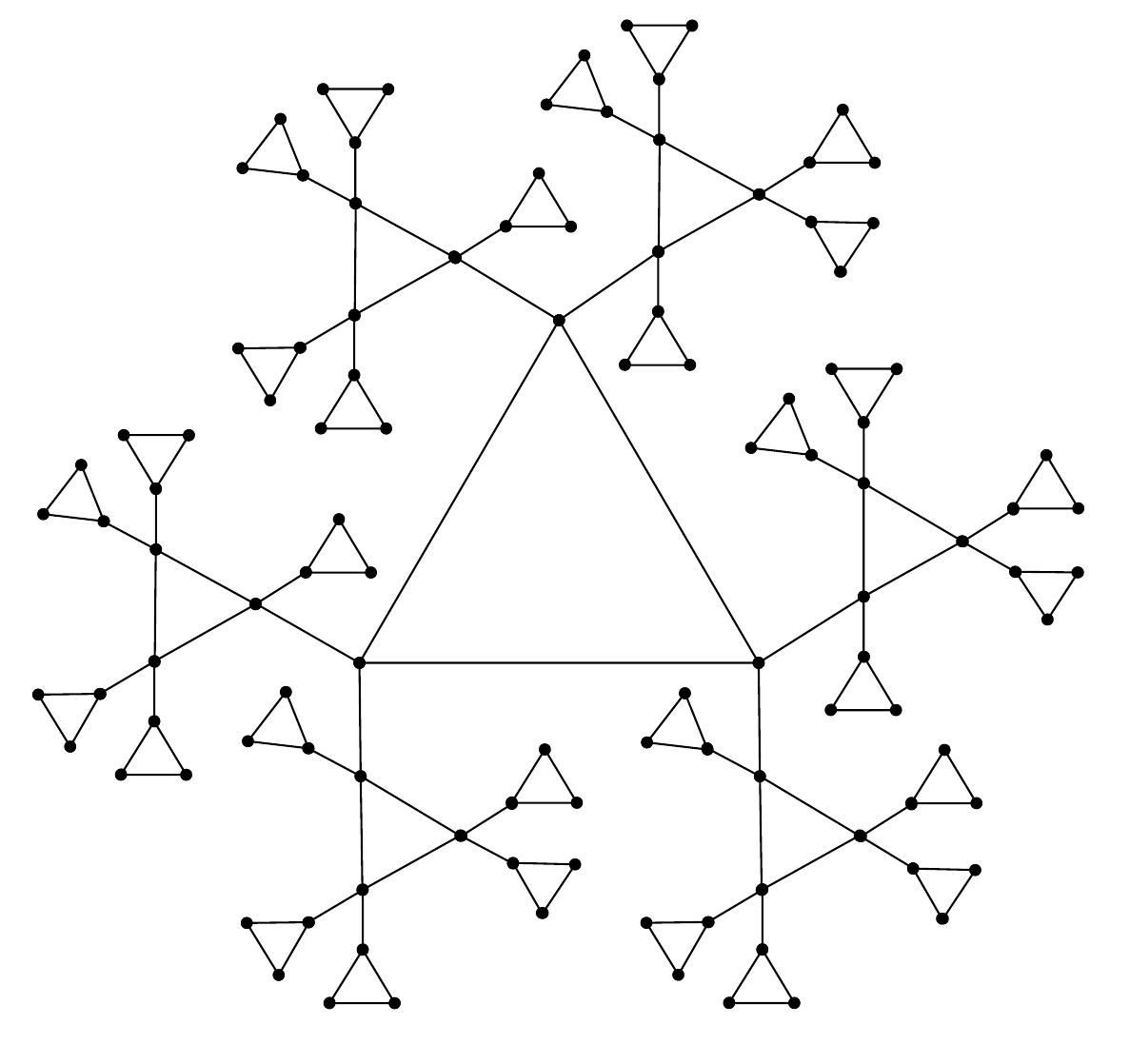}}
    \hspace{6mm}
    \subcaptionbox{\label{fig:b}}{\includegraphics[width = 0.3\textwidth, page={2}]{exp_graph.pdf}}
    \hspace{6mm}
    \subcaptionbox{\label{fig:c}}{\includegraphics[width = 0.3\textwidth, page={3}]{exp_graph.pdf}} 
\caption{The additive product graphs used in the QAOA and classical algorithm experiments.}
\label{fig:graph-set1}
\end{figure}

The threshold algorithm was firstly introduced in \cite{hirvonen2014large}. Hastings \cite{hastings2019classical} and Marwaha \cite{marwaha2021local} compared threshold algorithms with QAOA on high-girth regular graphs for $p=1,2$ respectively. Furthermore, our numerical experiments show that the threshold algorithm can provide convincing benchmarks against QAOA on Ramanujan graphs. Since the graphs in \fig{graph-set1} are no longer high-girth graphs, it is hard to derive a universal bound for the cut fraction of threshold algorithm. So Monte Carlo algorithm is used to calculate the expected cut fraction of the threshold algorithm in \algo{threshold-algo}. We obtained stable results (see \tab{classical-comp}) by repeating the experiments $10^6$ times for 1-step and 2-step version of threshold algorithm in one graph, and $10^5$ times for the 3-step threshold algorithm. The optimal threshold (see in \tab{para-threshold}) in the threshold algorithm was obtained through brute force search of all possible parameter values.

\vspace{4mm}
\begin{algorithm}[H]
    \SetAlgoLined
    \caption{Variations of Barak and Marwaha's Algorithm}
    \label{algo:barak_vari}
    \KwIn{graph $G$}
    \Begin{
    \For{every vertex $w \in G$}{
    $Y_\omega \leftarrow N(0,1)$\;
    }
    \For{every vertex $u \in G$}{
    \tcp{Para1: Use the original choice of parameters. }
    $X_u^1 \leftarrow \mathrm{sgn}\left(\sum_{w;d(w,u)\leq k}(-1)^{d(w,u)}(D-1)^{-0.5d(w,u)}Y_w\right)$\;
    
    \vspace{1em}
    \tcp{Para2: Let $S=\{w|d(w,u)=k\}$, the vertices of distance $d(w,u)$ from vertex $u$ are normalized. }
    {$S\leftarrow \{w|d(w,u)=k\}$;}
    
    $X_u^2 \leftarrow \mathrm{sgn}\left(\sum_{w;d(w,u)\leq k}(-1)^{d(w,u)} {|S|}^{-0.5}Y_w\right)$\;
    
    \vspace{1em}
    \tcp{Para3: For each vertex $u$ and its neighbors $v$, define $S_v=\{w|d(w,u)=k \wedge d(w,u) < d(w,v) \}$, representing the vertices at distance $d(w,u)$ from vertex $u$ that are closer to $u$ than to $v$. Note that different neighbors $v$ of $u$ result in different $|S_v|$. The algorithm evaluates all possible choice of parameter $|S_v|$. }
    {$S_v\leftarrow \{w|d(w,u)=k \wedge d(w,u) < d(w,v) \}$}
    
    $X_u^3 \leftarrow \mathrm{sgn}\left(\sum_{w;d(w,u)\leq k}(-1)^{d(w,u)} {|S_v|}^{-0.5}Y_w\right)$\;

    \vspace{1em}
    \tcp{Para4: For each vertex $u$ and its neighbors $v$, define ${S_v}^\prime=\{w|d(w,u)=k \wedge d(w,u) \leq d(w,v) \}$, representing the vertices at distance $d(w,u)$ from vertex $u$ that are not closer $v$. Note that different neighbors $v$ of $u$ result in different $|{S_v}^\prime|$. The algorithm evaluates all possible choice of parameter $|{S_v}^\prime|$. }
    
    {$S_v'\leftarrow \{w|d(w,u)=k \wedge d(w,u) \leq d(w,v) \}$}
    
    $X_u^4 \leftarrow \mathrm{sgn}\left(\sum_{w;d(w,u)\leq k}(-1)^{d(w,u)} {|S_v'|}^{-0.5}Y_w\right)$\;
    }
    \KwOut{the vector $X$}
    }
\end{algorithm}
\vspace{4mm}

The classical $k$-local algorithm introduced in \cite{barak2022classical} is designed for high-girth regular graphs. Basso et al.~\cite{basso2022quantum} compared QAOA at $p=11$ with this algorithm in their work for high-girth regular graphs. We also evaluate this classical algorithm in calculating the expected cut fraction on additive product graphs. Due to the different graph structures, there are several natural variations of the original algorithm in \algo{barak}. 

We still assign independent Gaussian random variables to each vertex. When considering the expectation of each edge being cut, the spins of the vertices $u,v$ at each edge $(u,v)$ are determined by a weighted sum of the random variables of the vertices within distance $k$ (if this algorithm is $k$-local) from $u,v$. The vertices at the same distance have the same coefficients in the summation. In \algo{barak}, the coefficient for summing the vertices $w$ at distance $d(w,u)$ from vertex $u$ is $(-1)^{d(w,u)}(D-1)^{-0.5d(w,u)}$. For edge $(u,v)$, denote the set $S_u = \{{w | d(w, u) =k \wedge d(w,u) < d(w,v)}\}$ as the vertices at distance $k$ from $u$ and closer to $u$. The choice of coefficients ensure that the weighted sum of random variables in $S_u$ is normalized in the summation. It is worth noting that for each edge $(u,v)$, the neighborhood structures of $u$ and $v$ are the same, so we only give $u$ as an example here. The same also applies to the coefficients in the additive product graph. For low-girth graphs, we use a similar idea to design the coefficients in the summation of Gaussian distributions. We expound different coefficient choice in \algo{barak_vari}. The algorithm was implemented in Monte Carlo and repeated $10^7$ times for each choice of parameters. The results are shown in \tab{classical-comp}. 

All the codes for classical algorithms are written in Python 3.10 and are publicly available.\footnote{\href{https://github.com/Macondooo/MaxCut-on-low-girth-graphs/tree/main/Classical-local-algorithms-calculation}{https://github.com/Macondooo/MaxCut-on-low-girth-graphs/tree/main/Classical-local-algorithms-calculation}} The random variables are generated by NumPy library. We run our experiments on a 12 vCPU Intel(R) Xeon(R) Platinum 8255C Processor with 40GB memory.

\begin{table}[H]
    \centering
     \caption{The results of the expected cut fraction in classical algorithms for the graphs depicted in \fig{graph-set1} are presented. For the variations of Barak and Marwaha's algorithm outlined in \algo{barak_vari}, the expected cut fraction has been calculated for each possible choice of the coefficients. 
     }
    \label{tab:classical-comp}
    \begin{tabular}{ccccccccc}
        \hline
        \multirow{2}*{Graph} & \multirow{2}*{$k$} & \multirow{2}*{Threshold algorithm}  &  \multicolumn{6}{c}{Variations of Barak and Marwaha's algorithm}  \\
        \cline{4-9}
        ~ & ~ & ~ & Para1 & Para2 & \multicolumn{2}{c}{Para3} & \multicolumn{2}{c}{Para4} \\
        \hline
        \multirow{3}*{\fig{a}} & 1 & 0.601 & 0.639 & \textbf{0.644} & 0.627 & 0.639 &  0.639 & 0.639 \\
        \cline{2-9}
        ~ & 2 & 0.675 & \textbf{0.682} & 0.677 & 0.678 & 0.674 & 0.678 & 0.674 \\
        \cline{2-9}
        ~ & 3 & \textbf{0.713} & 0.701 & 0.694 & 0.692 & 0.695 & 0.695 & 0.695 \\
        \hline
        
        \multirow{3}*{\fig{b}} & 1 & 0.640 & 0.664 & \textbf{0.666} & \multicolumn{2}{c}{0.664} & \multicolumn{2}{c}{0.664} \\
        \cline{2-9}
        ~ & 2 & 0.718 & \textbf{0.733} & 0.725 & \multicolumn{2}{c}{0.727} & \multicolumn{2}{c}{0.727}  \\
        \cline{2-9}
        ~ & 3 & \textbf{0.766} & 0.765 & 0.752 & \multicolumn{2}{c}{0.757} & \multicolumn{2}{c}{0.757} \\
        \hline

        \multirow{3}*{\fig{c}} & 1 & 0.645 & 0.643 & \textbf{0.655} & 0.609 & 0.643 & 0.643 & 0.643 \\
        \cline{2-9}
        ~ & 2 & 0.657 & \textbf{0.685} & 0.676 & 0.668 & 0.659 & 0.676 & 0.659 \\
        \cline{2-9}
        ~ & 3 & \textbf{0.713} & 0.698 & 0.684 & 0.661 & 0.683 & 0.680 & 0.683 \\
        \hline

    \end{tabular}
\end{table}

In \tab{classical-comp}, the results show that when $p$ is smaller, variations of Barak and Marwaha's algorithm achieves better cut fraction. On the other hand, when $p$ is larger, the threshold algorithm can perform better. We then summarize the results of the best-known classical local algorithms in \tab{cut-fraction} to enable a comparison with ma-QAOA and QAOA.

\begin{table}[H]
    \centering
    \caption{The optimal threshold $\tau$ for $p=1,2,3$ in the threshold algorithm (\algo{threshold-algo}).}
    \label{tab:para-threshold}  
    \begin{tabular}{ccc}
        \hline
        Graph & $p$ & $\tau$  \\
        \hline
        \multirow{3}*{\fig{a}} & 1 &  [2 or 3] \\
        \cline{2-3}
        ~ & 2 &  [3,2] \\
        \cline{2-3}
        ~ & 3 &  [2,3,2] \\
        \hline
        
        \multirow{3}*{\fig{b}} & 1 &  [2] \\
        \cline{2-3}
        ~ & 2 &  [2,3]\\
        \cline{2-3}
        ~ & 3 & [2,3,2]\\
        \hline

        \multirow{3}*{\fig{c}} & 1 &  [2]\\
        \cline{2-3}
        ~ & 2 & [2,2] \\
        \cline{2-3}
        ~ & 3 &  [1,2,2]\\
        \hline

    \end{tabular}
\end{table}

\subsection{Numerical evaluation on QAOA}
For ma-QAOA we use our iterative formulas (Eqs.~\eq{G} and \eq{EA}) derived in \sec{theories} to calculate the expectation of cut fraction \eq{cut-fraction} and optimize for $\boldsymbol{\gamma},\boldsymbol{\beta}$. The choice of $\boldsymbol{\gamma}$ vectors for different edges are listed below.
\begin{itemize}
    \item \fig{a} corresponds to $C_3 \oplus C_2 \oplus C_2 \oplus C_2$. Due to symmetry, a $\boldsymbol{\gamma}$ vector is used for edges from $C_3$ and another $\boldsymbol{\gamma}$ vector is used for the remaining $C_2$.
    \item \fig{b} corresponds to $C_4 \oplus C_4$, necessitating two distinct $\boldsymbol{\gamma}$ vectors, one for each component.
    \item \fig{c} corresponds to $C_3 \oplus C_3 \oplus C_2 \oplus C_2 \oplus C_2$. Due to symmetry, one $\boldsymbol{\gamma}$ vector is assigned to the edges from the two $C_3$, and another is assigned to the remaining $C_2$ components.
\end{itemize}

In QAOA, we simply replace the $\Gamma_1,\Gamma_2,\dots,\Gamma_c$ with the same $\Gamma$ in the iterative formula. Our codes are publicly available,\footnote{\href{https://github.com/Macondooo/MaxCut-on-low-girth-graphs/tree/main/QAOA-calculation}{https://github.com/Macondooo/MaxCut-on-low-girth-graphs/tree/main/QAOA-calculation}} which are written in Python 3.10. Our experiments for ma-QAOA and QAOA are obtained by simulations on classical computers, with the computational environment same to that used for the classical algorithms.

To accelerate the computations, we leverage vectorization techniques to evaluate $\eq{G}$ and $\eq{EA}$. In this approach, each term in the formulas is treated as a tensor. The exponential, for instance, is regarded as a tensor with subscript ${\boldsymbol{z}_k}$.
In addition, parameters $\boldsymbol{\gamma},\boldsymbol{\beta}$ are optimized by the L-BFGS-B algorithm in SciPy library. Memory is dominated by store both the iterative formula \eq{G} and the precomputed tensor. 
The results of ma-QAOA and QAOA and its comparison with the best-known classical local algorithms in \sec{comp-classical} are listed in \tab{cut-fraction}.

\begin{table}[!htbp]
    \centering
     \caption{Experiment results for ma-QAOA, QAOA, and best-known classical local algorithms. All the algorithms calculated the expected cut fraction. The column ``Best-known classical local algorithms'' shows the highest expected cut fraction of all best-known classical local algorithms for each graph and each $k$ in \tab{classical-comp}. }
    \begin{tabular}{ccccc}
        \hline
        Graph & $p,k$ & ma-QAOA & QAOA  & Best-known classical local algorithms  \\
        \hline
        \multirow{3}*{\fig{a}} & 1 & \textbf{0.65172} & 0.64590  & 0.644  \\
        \cline{2-5}
        ~ & 2 & \textbf{0.69754} & 0.69335 & 0.682 \\
        \cline{2-5}
        ~ & 3 & \textbf{0.72506} & 0.72071 &  0.713 \\
        \hline
        
        \multirow{3}*{\fig{b}} & 1 & 0.66238 & 0.66238 & \textbf{0.666} \\
        \cline{2-5}
        ~ & 2 & \textbf{0.74999} & 0.72780 & 0.733 \\
        \cline{2-5}
        ~ & 3 & 0.76414 & 0.76414 &  \textbf{0.766} \\
        \hline

        \multirow{3}*{\fig{c}} & 1 & \textbf{0.67642} & 0.65991 &  0.655 \\
        \cline{2-5}
        ~ & 2 & \textbf{0.72485} & 0.72084  &  0.685 \\
        \cline{2-5}
        ~ & 3 & \textbf{0.75435} & 0.74554  &  0.713 \\
        \hline

    \end{tabular}
    \label{tab:cut-fraction}
\end{table}

\tab{cut-fraction} presented the expected cut fraction of the graphs shown in \fig{graph-set1}, as obtained through the ma-QAOA, QAOA, and classical local algorithms in \sec{comp-classical}. Due to the limitations in computational capacity, we tested both algorithms under values of $p,k$ up to 3. 
The results show that on the additive product graphs in \fig{a} and \fig{c}, QAOA performs a 0.3-5.2 percentage advantage over classical local algorithms. In the meantime, ma-QAOA can improve this advantage by 0.6-2.5 percentage.
More importantly, while the classical local algorithms exhibit a minor advantage over QAOA on the graphs shown in \fig{b}, the ma-QAOA demonstrates the advantages by more precisely encoding the underlying graph structure.
As mentioned previously, \fig{b} is constructed from square-based atomic structures. We can categorize the edges into two distinct sets and assign different angle parameters to these two edge sets. Intuitively, if a given square is assigned angle parameters of $(\boldsymbol{\gamma}_{1,1},\boldsymbol{\gamma}_{2,1},\dots,\boldsymbol{\gamma}_{p,1})$, then the four adjacent squares will have angle parameters of $(\boldsymbol{\gamma}_{1,2},\boldsymbol{\gamma}_{2,2},\dots,\boldsymbol{\gamma}_{p,2})$. The converse also holds true. The squares with $(\boldsymbol{\gamma}_{1,2},\boldsymbol{\gamma}_{2,2},\dots,\boldsymbol{\gamma}_{p,2})$ are surrounded by squares with $(\boldsymbol{\gamma}_{1,1},\boldsymbol{\gamma}_{2,1},\dots,\boldsymbol{\gamma}_{p,1})$.
When $p=1$, the ma-QAOA is unable to fully discern the complete square structure in the graph, and thus provides no performance improvement. However, for $p=2$ and $3$ in \fig{b}, the ma-QAOA can recognize the square structures in the graph. Consequently, the use of two separate angle parameter vectors, $\boldsymbol{\gamma}_1$ and $\boldsymbol{\gamma}_2$, enables the ma-QAOA to better capture the graph structure, leading to better performance compared to both QAOA and the best-known classical local algorithms. 
When $p=3$ in \fig{b}, the performance of ma-QAOA and QAOA is identical. This may be attributed to the complexity of the optimization landscape, which causes random initial values failing to converge to the global optimum.

The optimal values $\boldsymbol{\gamma},\boldsymbol{\beta}$ of ma-QAOA and QAOA are shown in \tab{para-ma-qaoa} and \tab{para-qaoa} respectively. We find that the optimized parameters at lower circuit depths can serve as good initial values for optimization at higher depths. Specifically, the optimized $\boldsymbol{\gamma}$ and $\boldsymbol{\beta}$ at $p=1$ are close to the optimized ones at $p=2$, and the same holds for the parameters at $p=2$ and $p=3$.

\begin{table}[H]
    \centering
    \caption{The optimal $\boldsymbol{\gamma},\boldsymbol{\beta}$ for $p=1,2,3$ in ma-QAOA}
    \label{tab:para-ma-qaoa}  
    \begin{tabular}{cccc}
        \hline
        Graph & $p$ & $\boldsymbol{\gamma}$ & $\boldsymbol{\beta}$  \\
        \hline
        \multirow{3}*{\fig{a}} & 1 & [0.1901, 0.3195] & [0.3814] \\
        \cline{2-4}
        ~ & 2 & [0.1440, 0.2494, 0.2880, 0.4576] & [0.5221, 0.2725] \\
        \cline{2-4}
        ~ & 3 & [0.1156, 0.2294, 0.2565, 0.3943, 0.3047, 0.4685] & [0.5753, 0.4004, 0.2139] \\
        \hline
        
        \multirow{3}*{\fig{b}} & 1 & [0.2617, 0.2617] & [0.3927] \\
        \cline{2-4}
        ~ & 2 & [0, 0.4522, 0, 0.5592] & [0.5592, 0.4522] \\
        \cline{2-4}
        ~ & 3 & [0.1698, 0.1698, 0.3047, 0.3047, 0.3576, 0.3576] & [0.6037, 0.4643, 0.2422]  \\
        \hline

        \multirow{3}*{\fig{c}} & 1 & [0.1908, 0.5021] & [0.3835] \\
        \cline{2-4}
        ~ & 2 & [0.2390, 0.2763, 0.4102, 0.6389] & [0.4663, 0.2589] \\
        \cline{2-4}
        ~ & 3 & [0, 0.7853, 1.3256, 1.3615, 1.5708, 0.3415] & [0.3926, 0.7853, 0.7854] \\
        \hline

    \end{tabular}
\end{table}

\begin{table}[H]
    \centering
    \caption{The optimal $\boldsymbol{\gamma},\boldsymbol{\beta}$ for $p=1,2,3$ in QAOA}
    \label{tab:para-qaoa}  
    \begin{tabular}{cccc}
        \hline
        Graph & $p$ & $\boldsymbol{\gamma}$ & $\boldsymbol{\beta}$  \\
        \hline
        \multirow{3}*{\fig{a}} & 1 & [0.2536] & [0.3662] \\
        \cline{2-4}
        ~ & 2 & [0.2048, 0.3902] & [0.4857, 0.2578] \\
        \cline{2-4}
        ~ & 3 & [0.1786, 0.3448, 0.4000] & [0.5506, 0.3738, 0.2016] \\
        \hline
        
        \multirow{3}*{\fig{b}} & 1 & [0.2617] & [0.3927] \\
        \cline{2-4}
        ~ & 2 & [0.1978, 0.3534] & [0.5557, 0.3133] \\
        \cline{2-4}
        ~ & 3 & [0.1698, 0.3047, 0.3576] & [0.6037, 0.4643, 0.2422] \\
        \hline

        \multirow{3}*{\fig{c}} & 1 & [0.2851] & [0.3481] \\
        \cline{2-4}
        ~ & 2 & [0.2917 0.5623] & [0.4090 0.2408] \\
        \cline{2-4}
        ~ & 3 & [0.2014 0.48550 0.5916] & [0.5410 0.3187 0.1851] \\
        \hline
    \end{tabular}
\end{table}

\subsection{Additional experiments for low-girth graphs}\label{sec:other graph}

In this subsection, we analyze a set of graphs that differ from the Ramanujan and high-girth regular graphs previously discussed. Specifically, here we consider graphs with more mesh-like structures, rather than tree-like structures in  Ramanujan graphs and high-girth regular graphs. This distinction in the underlying topology of these graph families may produce interesting results when analyzing and comparing the performance of the QAOA and classical optimization techniques. 
The graphs tested are depicted in \fig{tiling-graph} and the results are listed in \tab{tiling-comp}. The running environments for quantum algorithms and classical local algorithms are identical to the ones in \sec{experiments}.

\begin{figure}[!htbp]
\centering
    \subcaptionbox{\label{fig:5-6cycle}}{\includegraphics[width = 0.3\textwidth,page={1}]{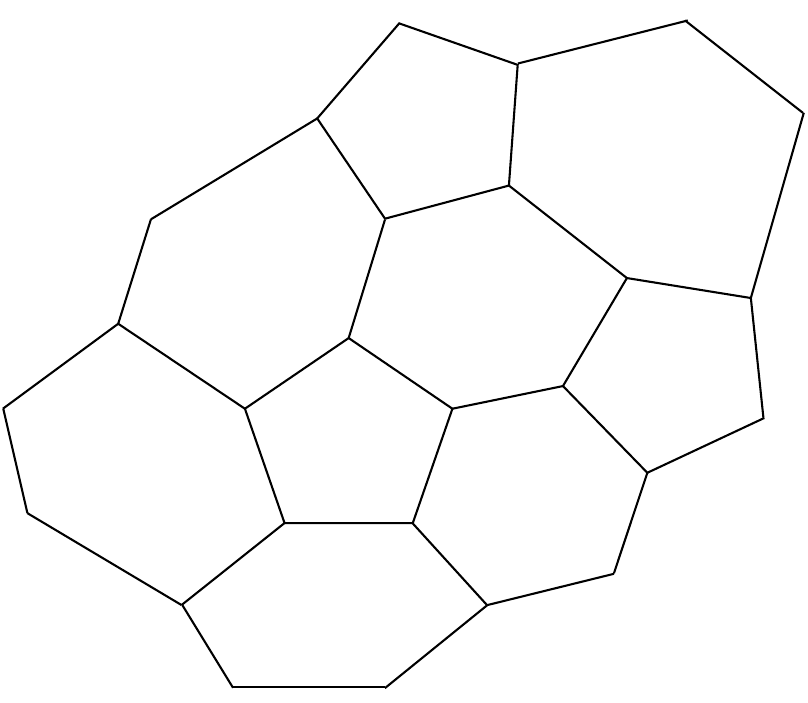}}
    \hspace{6mm}
    \subcaptionbox{\label{fig:3-4-6cycle}}{\includegraphics[width = 0.3\textwidth,page={2}]{mesh_graph.pdf}}
\caption{The tiling grid graphs are evaluated using ma-QAOA, QAOA, and classical local algorithms. \fig{5-6cycle} consists of pentagons and hexagons. Each pentagon is connected to five hexagons, and each hexagon is surrounded by three pentagons and three hexagons. In ma-QAOA, two $\boldsymbol{\gamma}$ vectors are employed: one for edges within $C_5$ and $C_6$, and another for edges shared by two $C_6$. \fig{3-4-6cycle} comprises triangles, quadrilaterals, and hexagons. Its vertices correspond to integer vertices from the hexagonal tiling grid. Similarly, there are two types of edges: those in $C_6, C_4$, and those in $C_3, C_4$, with two $\boldsymbol{\gamma}$ vectors applied to these edges.}
\label{fig:tiling-graph}
\end{figure}

\begin{table}[H]
    \centering
     \caption{The expected cut fraction of the graphs in \fig{tiling-graph} are presented. The results of ma-QAOA, QAOA, and best-known classical local algorithms are compared. }
    \label{tab:tiling-comp}
    \resizebox{\linewidth}{!}{
    \begin{tabular}{ccccccccccc}
        \hline
        \multirow{2}*{Graph} & \multirow{2}*{$k,p$} &  \multirow{2}*{ma-QAOA} &\multirow{2}*{QAOA} & \multirow{2}*{Threshold algorithm}  &  \multicolumn{6}{c}{Variations of Barak and Marwaha's algorithm}  \\
        \cline{6-11}
        ~ & ~ & ~ & ~ & ~ & Para1 & Para2 & \multicolumn{2}{c}{Para3} & \multicolumn{2}{c}{Para4} \\
        \hline
        
        \multirow{2}*{\fig{5-6cycle}} & 1 & 0.69245 & 0.69245 & 0.687 & 0.691 & \textbf{0.695} & 0.691 & 0.691 & 0.691 & 0.691 \\
        \cline{2-11}
        ~ & 2 & \textbf{0.75296} & 0.75243 & 0.730 &  0.736 & 0.734 & 0.721 & 0.736 & 0.736 & 0.736 \\
        \hline
        
        \multirow{2}*{\fig{3-4-6cycle}} & 1 & \textbf{0.65172} & 0.64589 & 0.601 & 0.639 & 0.644 & 0.639 & 0.627 & 0.639 & 0.639 \\
        \cline{2-11}
        ~ & 2 & \textbf{0.70420} & 0.69970 & 0.686 &  0.691 & 0.679 & 0.672 & 0.676 & 0.672 & 0.672 \\
        \hline
    \end{tabular}
    }
\end{table}

\tab{tiling-comp} shows that the performance of QAOA is better than the best-known classical local algorithms for the tiling grid graphs by 0.3-2.2 percentage except $p=1$ in \fig{5-6cycle}. The ma-QAOA further enhances this advantage by 0.1-0.9 percentage.

\section{Conclusions}
In this paper, we systematically investigated the performance guarantee of QAOA for MaxCut on low-girth graphs, and we use ma-QAOA to more precisely encoding the structure of low girth graph. In theory, 
we calculated the expected cut fraction of QAOA for the MaxCut on a set of expander graphs known as additive product graphs~\cite{mohanty2020x}, which was achieved by analyzing the structure of subgraphs in additive product graphs iteratively. Furthermore, we also extend our framework to the quantum MaxCut problem on additive product graphs.
In experiments, to construct convincing benchmarks against QAOA, we also investigated the best-known classical local algorithms for the MaxCut problem and tested them on low-girth additive product graphs. Specifically, we analyzed the threshold algorithm presented in \algo{threshold-algo} as well as the variations of the algorithm proposed by Barak and Marwaha \cite{barak2022classical} in \algo{barak_vari}. 
The results indicate that QAOA outperforms the best-known classical algorithms on additive product graphs, and ma-QAOA further enhances this advantage.

Furthermore, the performance comparison between QAOA and classical local algorithms are extended to the tiling grid graphs. On these types of graphs, the ma-QAOA and QAOA still demonstrates advantages over classical algorithms.

Our work leaves open questions that necessitate further investigation:
\begin{itemize}
    \item The performance of QAOA on other low-girth expander graphs is not yet fully clear. In particular, this paper leaves room for further exploration into how properties such as vertex expansion, edge expansion, or spectral expansion may affect the performance gap between QAOA and classical local algorithms.
    \item Still, the lower bound of QAOA's performance on general graphs is unknown. Previous studies mostly focused on regular graphs \cite{basso2022quantum,wurtz2021maxcut} or complete graphs \cite{boulebnane2021predicting} with different edge weights distribution. Specifically, understanding how QAOA's theoretical bounds compared to the 0.878 classical approximation guarantee for general graphs by the Goemans-Williamson algorithm~\cite{goemans1995improved} is an important open question. Exploring this comparison could shed light on the relative strengths of quantum and classical approaches to approximate optimization problems. 
\end{itemize}

\section*{Acknowledgments}
TL, YS, and ZY were supported by the National Natural Science Foundation of China (Grant Numbers 92365117 and 62372006), and The Fundamental Research Funds for the Central Universities, Peking University.

\newcommand{\arxiv}[1]{arXiv:\href{https://arxiv.org/abs/#1}{\ttfamily{#1}}\?}\newcommand{\arXiv}[1]{arXiv:\href{https://arxiv.org/abs/#1}{\ttfamily{#1}}\?}\def\?#1{\if.#1{}\else#1\fi}
\providecommand{\bysame}{\leavevmode\hbox to3em{\hrulefill}\thinspace}

\end{document}